\setlist[enumerate,1]{label=(\roman*)}
\newtheoremstyle{noparens}%
{}{}%
{\itshape}{}%
{\bfseries}{.}%
{ }%
{\thmname{#1}\thmnumber{ #2}\mdseries\thmnote{ #3}}
\theoremstyle{noparens}
\newtheorem{theorem}{Theorem}
\newtheorem{symmetry assumption}{Symmetry Assumption}
\newtheorem{lemma}{Lemma}
\newtheorem{definition}{Definition}
\begin{document}


\title{On the Second-Order Achievabilities of Indirect Quadratic Lossy Source Coding}

\author{
	\IEEEauthorblockN{Huiyuan Yang,~\IEEEmembership{Graduate Student Member,~IEEE}, Xiaojun Yuan,~\IEEEmembership{Senior Member,~IEEE}}
	\thanks{Huiyuan Yang and Xiaojun Yuan are with the National Key Laboratory on Wireless Communications, University of Electronic Science and Technology of China, Chengdu 611731, China (e-mail: hyyang@std.uestc.edu.cn; xjyuan@uestc.edu.cn).}
}

\maketitle
\IEEEpeerreviewmaketitle

\begin{abstract}
This paper studies the second-order achievabilities of indirect quadratic lossy source coding for a specific class of source models, where the term ``quadratic'' denotes that the reconstruction fidelity of the hidden source is quantified by a squared error distortion measure. Specifically, it is assumed that the hidden source $S$ can be expressed as $S = \varphi(X) + W$, where $X$ is the observable source with alphabet $\mathcal{X}$, $\varphi(\cdot)$ is a deterministic function, and $W$ is a random variable independent of $X$, satisfying $\mathbb{E}[W] = 0$, $\mathbb{E}[W^2] > 0$, $\mathbb{E}[W^3] = 0$, and $\mathbb{E}[W^6] < \infty$. Additionally, both the set $\{\varphi(x):\ x \in \mathcal{X} \}$ and the reconstruction alphabet for $S$ are assumed to be bounded. Under the above settings, a second-order achievability bound is established using techniques based on distortion-tilted information. This result is then generalized to the case of indirect quadratic lossy source coding with observed source reconstruction, where reconstruction is required for both the hidden source $S$ and the observable source $X$, and the distortion measure for $X$ is not necessarily quadratic. These obtained bounds are consistent in form with their finite-alphabet counterparts, which have been proven to be second-order tight.
\end{abstract}

\begin{IEEEkeywords}
	Second-order asymptotics, achievability, indirect/noisy/remote/hidden lossy source coding, squared error distortion measure, lossy data compression, source dispersion, Shannon theory.
\end{IEEEkeywords}

\section{Introduction}

In this paper, we study the lossy compression of correlated memoryless sources $(S,X)$, where only source $X$ is observed by the encoder. We consider two cases of interest: After receiving the codeword through a rate-constrained noiseless channel, the decoder aims to
\begin{enumerate}
	\item \label{case_i}
	reconstruct only the hidden source $S$, or
	\item \label{case_ii}
	reconstruct both the sources $(S, X)$,
\end{enumerate}
with the lowest possible distortions under given distortion measures. Note that case \ref{case_i} corresponds to the well-known indirect (or noisy, remote, hidden) lossy source coding problem formulated in \cite{Dobrushin1962Information} and developed, e.g., by \cite{Wolf1970Transmission}, \cite{Witsenhausen1980Indirect}, and \cite{Kostina2016Nonasymptotic}. Case \ref{case_ii} is formulated in \cite{Liu2021Rate} for some semantic communication scenarios where observable data reconstruction and intrinsic state inference are simultaneously required. Both scenarios arise when only a noisy version of the target is observable, while in case \ref{case_ii}, the observable data is also of interest. The noise can be introduced by processes such as quantization, measurement, uncoded transmission, and some inherent random transition processes.

In this paper, we consider a specific class of source models and a squared error distortion measure for the hidden source $S$, as specified at the beginning of Section \ref{Preliminaries}. The focus is on investigating the second-order achievability for the aforementioned problems. While the second-order rates of these problems have been established in \cite{Kostina2016Nonasymptotic} and \cite{Yang2024Indirect} under the finite-alphabet setting, the type-based methods employed there are inherently inapplicable to our setup, where the alphabets are allowed to be continuous. Instead, we derive second-order achievable bounds by leveraging some matching properties of the considered source model and the squared error distortion measure, and by extending the general results for abstract-alphabet sources from \cite{Kostina2012Fixed} to our indirect settings. It turns out that our bounds take the same form as the second-order rates for finite-alphabet sources obtained in \cite{Kostina2016Nonasymptotic} and \cite{Yang2024Indirect}.


\subsection{Related Works}
\label{Related_Works}

The indirect lossy source coding problem was originally formulated in \cite{Dobrushin1962Information} in 1962, where it was shown that when the objective is to minimize average distortion, the indirect source coding problem with distortion measure $\textsf{d}_s$ is asymptotically equivalent to a surrogate direct source coding problem with distortion measure $\bar{\textsf{d}}_s$, given by 
\begin{align}
	\bar{\textsf{d}}_s(x,z) = \mathbb{E}[\textsf{d}_s(S,z)|X=x]. 
\end{align}
\cite{Witsenhausen1980Indirect} further demonstrated that the surrogate distortion measure argument implied in \cite{Dobrushin1962Information} is capable of addressing a broader class of lossy source coding problems. The surrogate distortion measure was also employed in \cite{Ephraim1988A} to extend the quantization theory for noise-free sources to that for noisy sources. The quantization of noisy sources has been further explored in works such as \cite{Ayanoglu1990On, Fischer1990Estimation, Linder1997Empirical}. Under the mean-squared error distortion, \cite{Sakrison1968Source} demonstrated that an asymptotically optimal coding procedure can be constructed by first estimating the hidden source sequence from the observed data and then encoding this estimate as if it were noise-free. Subsequently, \cite{Wolf1970Transmission} showed that the result in \cite{Sakrison1968Source} holds even in non-asymptotic settings. Additionally, as a special case of indirect lossy source coding under the logarithmic loss distortion measure, the information bottleneck problem was introduced in \cite{Tishby1999The} and has been extensively investigated in the literature (e.g., \cite{Chechik2005Information, Alemi2016Deep, Saxe2018On}) due to its profound connections with machine learning \cite{Goldfeld2020The}. 

The paradigm of simultaneously reconstructing observable and hidden sources was introduced recently in 2021 in \cite{Liu2021Rate, Liu2022Indirect}, where the hidden source $S$ was considered as the embedded semantics within the observable data $X$. This simultaneous reconstruction formulation was motivated by scenarios where both the observable data and its inferences (semantics) are required. Some examples include autonomous driving with raw data (e.g., videos) and its inferences (e.g., features) for human vision and machine processing, respectively, as well as speech signal compression for both lossy reconstruction and text transcription. In \cite{Liu2021Rate, Liu2022Indirect}, the corresponding rate-distortion function was derived by the unified treatment in \cite{Witsenhausen1980Indirect}, and some specializations were provided for cases such as quadratic Gaussian. The semantic modeling approach in \cite{Liu2021Rate, Liu2022Indirect} was subsequently applied to the multi-terminal source coding and joint source-channel coding scenarios in \cite{Yuxuan2023Rate} and \cite{Shi2023Excess}, respectively. \cite{Stavrou2023Role} and \cite{Li2023Fundamental} explored several analytical properties of the so-called semantic rate-distortion function. \cite{Stavrou2023Role} also gave a Blahut–Arimoto-type algorithm for computing the rate-distortion function, while \cite{Li2023Fundamental} further introduced a neural network for estimating the rate-distortion function from samples.


Research on finite blocklength and second-order analysis in source coding has its roots in the pioneering work \cite{Volker1962Asymptotische} of Strassen in 1962 in the context of lossless source coding (see also \cite{Kontoyiannis1997Second}). However, it was not until 2000 that the second-order analysis of lossy source coding was first investigated in \cite{Kontoyiannis2000Pointwise} (see also \cite{Kontoyiannis2002Arbitrary}), which considered variable-rate coding with the distortion required to remain below a given threshold. It is worth mentioning that the dispersion of lossy source coding, which captures the non-trivial component of the second-order term of the rate, was first introduced in \cite{Kontoyiannis2000Pointwise} as ``minimal coding variance.'' In the early 2010s, the second-order analysis of lossy source coding was revisited in \cite{Ingber2011Dispersion} and \cite{Kostina2012Fixed} under the excess-distortion probability constraint. \cite{Ingber2011Dispersion} focused on the finite-alphabet and quadratic Gaussian cases, while \cite{Kostina2012Fixed} established non-asymptotic achievability and converse bounds and, through asymptotic analysis of these bounds, obtained second-order rates for sources with abstract alphabets. After \cite{Ingber2011Dispersion} and \cite{Kostina2012Fixed}, the second-order asymptotics of various lossy source coding scenarios have been studied, such as in \cite{Kostina2013LossyJoint, Kostina2016Nonasymptotic, No2016Strong, Zhou2017Second, Zhou2019Refined, Zhou2019NonAsymptotic, Yang2024Indirect}. Thorough overviews can be found in \cite{Vincent2014Asymptotic} and \cite{Zhou2023Finite}. 

In particular, for indirect lossy source coding of stationary memoryless finite-alphabet sources with separable distortion measure, \cite{Kostina2016Nonasymptotic} showed that the minimum achievable codebook size $M$ compatible with blocklength $k$, maximum admissible distortion $d_s$, and excess distortion probability $\epsilon$ satisfies
\begin{align}
\label{second_order_general_introduction}
\log M^\star(k, d_s, \epsilon) =& k R_{S,X}(d_s) 
+ \sqrt{k \tilde{\mathcal{V}}(d_s)} Q^{-1}(\epsilon) + O(\log k), 
\end{align}
where $R_{S,X}(d_s)$ and $\tilde{\mathcal{V}}(d_s)$ are the rate-distortion and the rate-dispersion functions defined in \cite{Kostina2016Nonasymptotic}, and $Q^{-1}(\cdot)$ denotes the inverse of the complementary standard Gaussian cumulative distribution function. Similarly, the corresponding second-order result for indirect lossy source coding with observed source reconstruction is derived for stationary memoryless finite-alphabet sources in \cite{Yang2024Indirect}, and is given by
\begin{align}
\label{second_order_general_two_introduction}
\log M^\star(k, d_s, d_x, \epsilon) =& k R_{S,X}(d_s, d_x) 
+ \sqrt{k \tilde{\mathcal{V}}(d_s, d_x)} Q^{-1}(\epsilon)\nonumber \\ &+ O(\log k),
\end{align}
where the quantities in \eqref{second_order_general_two_introduction} are the counterparts of that in \eqref{second_order_general_introduction} and is defined in \cite{Yang2024Indirect}. However, as previously mentioned, for our source model that allows continuous alphabets, the type-based methods used in \cite{Kostina2016Nonasymptotic} and \cite{Yang2024Indirect} are no longer applicable. As will be demonstrated, we avoided using type-based methods and obtained achievability results that are consistent in form with \eqref{second_order_general_introduction} and \eqref{second_order_general_two_introduction}. We will discuss our proof and its relationship to the proofs in \cite{Kostina2016Nonasymptotic} and \cite{Yang2024Indirect} in Section \ref{Discussion}.

The remainder of this paper proceeds as follows. In Section \ref{Preliminaries}, we introduce basic definitions and properties. Section \ref{Second_Order_Asymptotics} lists our main results. In Section \ref{proof_theorem_Gaussian_approximation_achi}, we provide the proof of Theorem \ref{theorem_Gaussian_approximation_achi} and the corresponding discussion. Section \ref{Conclusion} concludes the paper. The proofs of Theorem \ref{theorem_Gaussian_approximation_achi_two} and several auxiliary lemmas are provided in the Appendices.

\section{Preliminaries}
\label{Preliminaries}

We consider correlated memoryless sources $(S, X)\sim P_{SX}=P_{X}P_{S|X}$, where we are given the distribution $P_X$ on alphabet $\mathcal{X}$ and the conditional distribution $P_{S|X}: \mathcal{X} \to \mathcal{S}$. Denote by $\hat{\mathcal{S}}$ the reconstruction alphabet of source $S$. Since the encoder can only observe source $X$, we refer to $X$ as the observable source and $S$ as the hidden source. We assume that 
\begin{equation}
	S = \varphi(X) + W,
\end{equation}
where $\varphi(\cdot)$ is a deterministic function, and the noise $W$ is independent of $X$ and satisfies 
\begin{align}
	\mathbb{E}[W] = 0,\
	\mathbb{E}[W^2] > 0,\
	\mathbb{E}[W^3] = 0,\
	\mathbb{E}[W^6] < \infty.
\end{align}
We additionally assume that both sets $\{\varphi(x):\ x \in \mathcal{X} \}$ and $\hat{\mathcal{S}}$ are bounded.


Denote the length-$k$ i.i.d. block of $(S, X)$ as $(S^k, X^k)$. We consider a separable squared error distortion measure $\textsf{d}_s:\ \mathcal{S}^k \times \hat{\mathcal{S}}^k \mapsto [0, +\infty]$ for $S^k$, i.e.,
\begin{align}
\label{quadratic_distortion}
\textsf{d}_s(s^k, z^k) \triangleq \frac{1}{k}\sum_{i=1}^k (s_i - z_i)^2,
\end{align}
where $z^k$ denotes the reconstruction of $s^k$, and $s_i$ and $z_i$ denote the $i$-th elements of $s^k$ and $z^k$, respectively. With a slight abuse of notation, we set
\begin{align}
	\textsf{d}_s(s, z) = (s-z)^2.
\end{align}

In the remainder of this section, we first introduce key definitions and properties pertaining to the case of solely recovering the hidden source in Subsection \ref{Indirect_Quadratic_Lossy_Source_Coding}, and then generalize the contents in Subsection \ref{Indirect_Quadratic_Lossy_Source_Coding} to the case of simultaneously recovering both the sources in Subsection \ref{Indirect_Quadratic_Lossy_Source_Coding_with_Observed_Source_Reconstruction}.

\subsection{Indirect Quadratic Lossy Source Coding}
\label{Indirect_Quadratic_Lossy_Source_Coding}

\begin{figure*}
	\centering
	\subfigure[The $(k, M, d_s, \epsilon)$ code.]{
		\begin{minipage}{0.6\textwidth}
			\includegraphics[width=\textwidth]{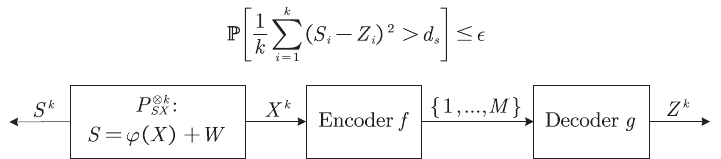} \\
	\end{minipage}
	\label{one_distortion_fig}}
	\subfigure[The $(k, M, d_s, d_x, \epsilon)$ code.]{
		\begin{minipage}{0.6\textwidth}
			\includegraphics[width=\textwidth]{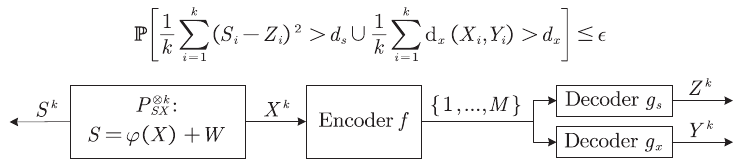} \\
	\end{minipage}
	\label{two_distortion_fig}}
	\caption{Indirect quadratic lossy source coding with/without observed source reconstruction.} 
	\label{JDSLC_figs}
\end{figure*}

\subsubsection{$(k, M, d_s, \epsilon)$ Code}
\label{The_Code_02}
As illustrated in Fig. \ref{one_distortion_fig}, the $(k, M, d_s, \epsilon)$ code is defined as follows.
\begin{definition}
	\label{definition_de_code}
	A $(k, M, d_s, \epsilon)$ code is a pair of mappings $\textsf{f}: \mathcal{X}^k \to \{1, \dots, M\}$, $\textsf{g}: \{1,\dots, M\} \to \hat{\mathcal{S}}^k$ such that $\mathbb{P}[\textsf{d}_s(S^k,\textsf{g}(\textsf{f}(X^k)))>d_s] \leq \epsilon$.
\end{definition}
The minimum achievable codebook size at blocklength $k$, maximum admissible distortion $d_s$, and excess distortion probability $\epsilon$, is defined as
\begin{equation}
M^\star(k, d_s, \epsilon) \triangleq \min \{M: \exists (k, M, d_s, \epsilon)\ \textrm{code}\}.
\end{equation}
In this paper, we aim to characterize the second-order achievabilities of $\log M^\star(k, d_s, \epsilon)$ as $k \to \infty$.

\subsubsection{Rate-Distortion Function}
The rate-distortion function of indirect lossy source coding can be defined by \cite{Dobrushin1962Information}
\begin{subequations}
	\label{rate_distortion_function_de}
	\begin{align}
	R_{S,X}(d_s) \triangleq \inf_{P_{Z|X}}\ &I(X;Z) \label{rate_distortion_function_obj_de} \\
	\mathrm{s.t.}\
	&\mathbb{E}\left[\bar{\textsf{d}}_s(X, Z)\right] \leq d_s, \label{Constraint_sur}
	\end{align}
\end{subequations}
where $\bar{\textsf{d}}_s:\ \mathcal{X} \times \hat{\mathcal{S}} \mapsto [0, +\infty]$ is given by
\begin{align}
	\bar{\textsf{d}}_s(x,z)\triangleq& \mathbb{E}[\textsf{d}_s(S, z)| X=x], \label{sura_d_s} \\
	=&(\varphi(x) - z)^2 + \sigma_w^2.
\end{align}
We also denote the corresponding separable distortion measure by
\begin{align}
\bar{\textsf{d}}_s(x^k,z^k)\triangleq \frac{1}{k}\sum_{i=1}^k \bar{\textsf{d}}_s(x_i,z_i). \label{sura_d_s_k}
\end{align}
Note that the constraint \eqref{Constraint_sur} implies that our considered indirect lossy source coding problem achieves the same performance with a surrogate (direct) lossy source coding problem with per-letter distortion measure $\bar{\textsf{d}}_s$ in the limit of infinite blocklengths \cite{Wolf1970Transmission,Witsenhausen1980Indirect}. We assume that $R_{S,X}(d_s) < \infty$ for some $d_s$ and define
\begin{align}
	d_{s, \min} \triangleq& \mathbb{E}[\inf_{z \in \widehat{\mathcal{S}}} \bar{\textsf{d}}_s(X,z)], \label{d_s_min}  \\
	d_{s, \max} \triangleq& \inf_{z \in \hat{\mathcal{S}}}\mathbb{E}[\bar{\textsf{d}}_s(X,z)], \label{d_s_max}
\end{align}
where the expectation in \eqref{d_s_max} is with respect to the unconditional distribution of $X$.


In the following sections, we show that the rate-distortion function $R_{S,X}(d_s)$ bounds the first-order term of $\log M^\star(k, d_s, \epsilon)/k$, i.e.,
\begin{equation}
\limsup\limits_{k\to \infty} \frac{\log M^\star(k, d_s, \epsilon)}{k} \leq R_{S,X}(d_s)
\end{equation}
under several mild conditions. 

\subsubsection{Distortion-Tilted Information}
For $d_s > d_{s, \min}$ and $d_s \neq d_{s, \max}$, the indirect distortion-tilted information $\tilde{\jmath}_{S,X}$ for the indirect lossy source coding problem, and the (direct) distortion-tilted information $\jmath_{X}$ for the surrogate (direct) lossy source coding problem, are respectively defined as \cite{Kostina2016Nonasymptotic, Kostina2012Fixed}
\begin{align}
\tilde{\jmath}_{S,X}(s,x,z,d_s) \triangleq& \imath_{X;Z^\star}(x;z) + \lambda_s^\star \textsf{d}_s(s,z) - \lambda_s^\star d_s, \label{tilted_information} \\
\jmath_{X}(x,d_s) 
\triangleq& \log \frac{1}{\mathbb{E}[\exp\{\lambda_s^\star d_s - \lambda_s^\star \bar{\textsf{d}}_s(x,Z^\star)\}]}, \label{direct_tilted_information}
\end{align}
where $P_{Z^\star| X}$ achieves the rate-distortion function $R_{S,X}(d_s)$, $P_{Z^\star}=\mathbb{E}_{X\sim P_X}[P_{Z^\star| X}]$,
\begin{align}
\imath_{X;Z^\star}(x;z) \triangleq& \log \frac{\mathrm{d} P_{Z^\star| X=x}}{\mathrm{d} P_{Z^\star}}(z), \label{information_density} \\
\lambda_s^\star \triangleq& - R_{S,X}^{\prime}(d_s), \label{lambda_s}
\end{align}
and the expectation in \eqref{direct_tilted_information} is with respect to the unconditional distribution $P_{Z^\star}$. By the definition \eqref{tilted_information}, we have 
\begin{align}
	\mathbb{E}[\tilde{\jmath}_{S,X}(S,X,Z^{\star},d_s)] = R_{S,X}(d_s).
\end{align}
It is worth noting that $\jmath_{X}(x,d_s)$ can be expressed in a form similar to $\tilde{\jmath}_{S,X}(s,x,z,d_s)$, i.e.,\cite[Theorem 2.1]{Kostina2013Lossy}, \cite[Property 1]{Kostina2012Fixed}
\begin{align}
\label{direct_tilted_information_as}
\jmath_{X}(x,d_s) 
=\imath_{X;Z^\star}(x;z) + \lambda_s^\star \bar{\textsf{d}}_s(x,z) - \lambda_s^\star d_s
\end{align}
for $P_{Z^\star}$-almost every $z$. As a consequence,
\begin{align}
\label{E_j_equals_R}
	\mathbb{E}[\jmath_{X}(X,d_s)] = R_{S,X}(d_s).
\end{align}
By \eqref{tilted_information} and \eqref{direct_tilted_information_as}, we have
\begin{equation}
	\jmath_{X}(x,d_s) = \mathbb{E}[\tilde{\jmath}_{S,X}(S,x,z,d_s)|X=x]
\end{equation}
for $P_{Z^\star}$-almost every $z$.

Define the indirect rate-dispersion function, $\tilde{\mathcal{V}}(d_s)$, and the direct rate-dispersion function, $\mathcal{V}(d_s)$, as
\begin{align}
\tilde{\mathcal{V}}(d_s) \triangleq& \mathrm{Var}\left[\tilde{\jmath}_{S,X}(S,X,Z^\star,d_s)\right], \label{indirect_rate_dispersion_func} \\
\mathcal{V}(d_s) \triangleq& \mathrm{Var}\left[\jmath_{X}(X,d_s)\right]. \label{direct_rate_dispersion_func}
\end{align}
Note that by the law of total variance,
\begin{equation}
\label{relationship_V_tildeV}
\tilde{\mathcal{V}}(d_s) = \mathcal{V}(d_s) + \lambda_s^{\star 2} \mathbb{E}\left[\mathrm{Var}\left[\textsf{d}_s(S,Z^\star)|X,Z^\star\right]\right],
\end{equation}
where $\mathrm{Var}\left[U|V\right] \triangleq \mathbb{E}\left[(U - \mathbb{E}\left[U|V\right])^2| V \right]$. Observe that the difference between $\tilde{\mathcal{V}}(d_s)$ and $\mathcal{V}(d_s)$ arises from the residual uncertainty about $S$ given $X$.

In the following sections, we will show that the indirect rate-dispersion function $\tilde{\mathcal{V}}(d_s)$ plays a crucial role in bounding the second-order term of $\log M^\star(k, d_s, \epsilon)/k$, i.e.,
\begin{equation}
\limsup\limits_{k\to \infty} \frac{\log M^\star(k, d_s, \epsilon) - kR_{S,X}(d_s)}{\sqrt{k}} \leq \sqrt{\tilde{\mathcal{V}}(d_s)}Q^{-1}(\epsilon)
\end{equation}
under several mild conditions.

\subsection{Indirect Quadratic Lossy Source Coding with Observed Source Reconstruction}
\label{Indirect_Quadratic_Lossy_Source_Coding_with_Observed_Source_Reconstruction}

In this subsection, we generalize the content in the previous subsection to the case of simultaneously recovering both the hidden and observable sources. 

Specifically, denote by $\hat{\mathcal{X}}$ the reconstruction alphabet of source $X$. We consider per-letter distortion measure $\textsf{d}_x: \mathcal{X} \times \hat{\mathcal{X}} \mapsto [0, +\infty)$ for the observable source $X$ and consider a separable distortion measure for block $X^k$, i.e., with a slight abuse of notation,
\begin{align}
\label{d_x_distortion_measure}
\textsf{d}_x(x^k, y^k) \triangleq \frac{1}{k}\sum_{i=1}^k \textsf{d}_x(x_i, y_i).
\end{align}
Note that, unlike $\textsf{d}_s$, we consider $\textsf{d}_x$ to be a general distortion measure, without the assumption of being quadratic. The corresponding code is defined as follows.
\begin{definition}
	\label{definition_general_code}
	A $(k, M, d_s, d_x, \epsilon)$ code is a triplet of mappings $\textsf{f}: \mathcal{X}^k \to \{1, \dots, M\}$, $\textsf{g}_s: \{1,\dots, M\} \to \hat{\mathcal{S}}^k$, and $\textsf{g}_x: \{1,\dots, M\} \to \hat{\mathcal{X}}^k$ such that the joint excess distortion probability $\mathbb{P}[\textsf{d}_s(S^k,\textsf{g}_s(\textsf{f}(X^k)))>d_s \cup \textsf{d}_x(X^k,\textsf{g}_x(\textsf{f}(X^k)))>d_x] \leq \epsilon$.
\end{definition}
The $(k, M, d_s, d_x, \epsilon)$ is illustrated in Fig. \ref{two_distortion_fig}. The minimum achievable codebook size with parameters $k$, $d_s$, $d_x$, $\epsilon$ is defined as
\begin{equation}
M^\star(k, d_s, d_x, \epsilon) \triangleq \min \{M: \exists (k, M, d_s, d_x, \epsilon)\ \textrm{code}\}.
\end{equation}
This paper seeks to provide a second-order achievability bound for $\log M^\star(k, d_s, d_x, \epsilon)$ as $k \to \infty$.

The rate-distortion function for the problem of indirect quadratic lossy source coding with observed source reconstruction is given by
\begin{subequations}
	\label{rate_distortion_function}
	\begin{align}
	R_{S,X}(d_s,d_x) \triangleq \inf_{P_{ZY|X}}\ &I(X;Z,Y) \label{rate_distortion_function_obj} \\
	\mathrm{s.t.}\
	&\mathbb{E}\left[\bar{\textsf{d}}_s(X, Z)\right] \leq d_s, \\
	&\mathbb{E}\left[\textsf{d}_x(X,Y)\right] \leq d_x. 
	\end{align}
\end{subequations}
Define $\mathcal{D}_{\mathrm{adm}} \triangleq \{(d_s,d_x):  R_{S,X}(d_s,d_x) < \infty \}$ and we assume that $D_{\mathrm{adm}}$ is non-empty. 
Denote by $\mathcal{P}^{\star}(d_s,d_x)$ the set of optimal solutions of problem \eqref{rate_distortion_function} associated with $(d_s,d_x) \in \mathcal{D}_{\mathrm{adm}}$.  Define set
\begin{align}
\mathcal{D}_{nd} \triangleq \{&(d_s,d_x) \in \mathcal{D}_{\mathrm{adm}}: \forall P_{Z^\star Y^\star|X} \in \mathcal{P}^{\star}(d_s,d_x),\nonumber \\ &\mathbb{E}\left[\bar{\textsf{d}}_s(X,Z^\star)\right] = d_s \textrm{ and }  \mathbb{E}\left[\textsf{d}_x(X,Y^\star)\right] = d_x\}. \label{Definition_of_Dnd}
\end{align}
Note that for all $(d_s,d_x) \notin \mathcal{D}_{nd}$, at least one of the constraints in problem \eqref{rate_distortion_function} would be loose, leading to a degeneration of the optimization problem. This work focuses exclusively on the non-degenerate case, specifically on $(d_s, d_x) \in \textsf{int}(\mathcal{D}_{nd})$, where $\textsf{int}(\cdot)$ denotes the interior of the input set.

For $(d_s, d_x) \in \textsf{int}(\mathcal{D}_{nd})$, the distortion-tilted informations are defined as follows:
\begin{equation}
\label{tilted_information_two}
\begin{aligned}
\tilde{\jmath}_{S,X}(s,x,z,y,d_s,d_x) \triangleq& \imath_{X;Z^\star Y^\star}(x;z,y) + \lambda_s^\star \textsf{d}_s(s,z)\\ 
&+ \lambda_x^\star \textsf{d}_x(x,y) - \lambda_s^\star d_s - \lambda_x^\star d_x,
\end{aligned}
\end{equation}
\begin{align}
\label{direct_tilted_information_two}
&\jmath_{X}(x,d_s,d_x) \nonumber \\ 
\triangleq &\log\! \frac{1}{\mathbb{E}[\exp\{\lambda_s^\star d_s \! + \! \lambda_x^\star d_x \!-\! \lambda_s^\star \bar{\textsf{d}}_s(x,Z^\star) \!-\! \lambda_x^\star \textsf{d}_x(x,Y^\star)\}]},
\end{align}
where $P_{Z^\star Y^\star| X}$ achieves the rate-distortion function $R_{S,X}(d_s,d_x)$, $P_{Z^\star Y^\star}=\mathbb{E}_{X\sim P_X}[P_{Z^\star Y^\star| X}]$,
\begin{equation}
\label{information_density_two}
\imath_{X;Z^\star Y^\star}(x;z,y) \triangleq \log \frac{\mathrm{d} P_{Z^\star Y^\star| X=x}}{\mathrm{d} P_{Z^\star Y^\star}}(z,y),
\end{equation}
\begin{equation}
\label{lambda_s_two}
\lambda_s^\star \triangleq -\frac{\partial R_{S,X}(d_s,d_x)}{\partial d_s},
\end{equation}
\begin{equation}
\label{lambda_x}
\lambda_x^\star \triangleq -\frac{\partial R_{S,X}(d_s,d_x)}{\partial d_x},
\end{equation}
and the expectation in \eqref{direct_tilted_information_two} is with respect to the unconditional distribution $P_{Z^\star Y^\star}$. Similar to \eqref{direct_tilted_information_as}, for $P_{Z^\star Y^\star}$-almost every $(z,y)$, it holds that \cite[Property 1]{Yang2024Indirect}
\begin{equation}
\label{d_tilted_information_of_surrogate}
\begin{aligned}
\jmath_{X}(x,d_s,d_x) = &\imath_{X;Z^\star Y^\star}(x;z,y) + \lambda_s^\star \bar{\textsf{d}}_s(x,z)\\ 
&+ \lambda_x^\star \textsf{d}_x(x,y) - \lambda_s^\star d_s - \lambda_x^\star d_x.
\end{aligned}
\end{equation}
Consequently, we have 
\begin{align}
\label{E_j_equals_R_two}
\mathbb{E}[\jmath_{X}(X,d_s, d_x)] = R_{S,X}(d_s, d_x).
\end{align}

Define the rate-dispersion functions, $\tilde{\mathcal{V}}(d_s,d_x)$ and $\mathcal{V}(d_s,d_x)$, as \cite{Yang2024Indirect}
\begin{align}
\tilde{\mathcal{V}}(d_s,d_x) \triangleq& \textrm{Var}\left[\tilde{\jmath}_{S,X}(S,X,Z^\star,Y^\star,d_s,d_x)\right], \label{indirect_rate_dispersion_func_two} \\
\mathcal{V}(d_s,d_x) \triangleq& \textrm{Var}\left[\jmath_{X}(X,d_s,d_x)\right]. \label{direct_rate_dispersion_func_two}
\end{align}
Note that \cite[Proposition 1]{Yang2024Indirect}
\begin{equation}
\label{relationship_V_tildeV_two}
\tilde{\mathcal{V}}(d_s,d_x) = \mathcal{V}(d_s,d_x) + \lambda_s^{\star 2}\mathbb{E}\left[\textrm{Var}\left[\textsf{d}_s(S,Z^\star)|X,Z^\star\right]\right].
\end{equation}

\section{Main Results: Second-Order Achievabilities}
\label{Second_Order_Asymptotics}

In this section, we present the second-order achievability results for the codes defined in Subsections \ref{Indirect_Quadratic_Lossy_Source_Coding} and \ref{Indirect_Quadratic_Lossy_Source_Coding_with_Observed_Source_Reconstruction}, respectively. For the case of solely recovering the hidden source, we have the following theorem. 
\begin{theorem}
	\label{theorem_Gaussian_approximation_achi}
	(Second-order Achievability for $(k, M, d_s, \epsilon)$ Codes): Fix $0 < \epsilon < 1$ and $d_{s, \min} < d_s < d_{s, \max}$. 
	Under the source model and distortion measure introduced in Section \ref{Preliminaries}, we have
	\begin{align}
	\label{second_order_general}
	\log M^\star(k, d_s, \epsilon) \leq& k R_{S,X}(d_s) 
	+ \sqrt{k \tilde{\mathcal{V}}(d_s)} Q^{-1}(\epsilon)\nonumber \\ &+ O(\log k),
	\end{align}
	where $f(k)=O(g(k))$ means $\lim\sup_{k \to \infty}\left|f(k)/g(k)\right| < \infty$.
\end{theorem}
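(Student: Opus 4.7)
My plan is to extend the abstract-alphabet random coding framework of Kostina--Verdu \cite{Kostina2012Fixed} to the indirect setting via a \emph{joint} analysis that couples the surrogate covering fluctuation with the noise tail of $S$ given $X$, yielding the indirect dispersion $\tilde{\mathcal{V}}(d_s)$ rather than the loose sum a naive union bound would produce.

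First, draw a random codebook $Z_1^k, \ldots, Z_M^k$ i.i.d.\ from $P_{Z^\star}^k$ and use the minimum-surrogate-distortion encoder $J = \arg\min_j \bar{\textsf{d}}_s(X^k, Z_j^k)$. Setting $\bar{D}_J = \bar{\textsf{d}}_s(X^k, Z_J^k)$, the excess-distortion event becomes $\{\Delta_k > d_s - \bar{D}_J\}$, where the per-letter summand of $\Delta_k = \textsf{d}_s(S^k, Z_J^k) - \bar{\textsf{d}}_s(X^k, Z_J^k)$ is $V_i = 2(\varphi(X_i) - Z_{J,i})W_i + W_i^2 - \sigma_w^2$. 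Since $W$ is independent of $X$ and of the codebook, one has $\mathbb{E}[V_i \mid X^k, Z_J^k] = 0$ (using $\mathbb{E}[W] = 0$) and $\mathrm{Var}[V_i \mid X_i, Z_{J,i}] = 4(\varphi(X_i) - Z_{J,i})^2 \sigma_w^2 + \mathbb{E}[W^4] - \sigma_w^4$ (the assumption $\mathbb{E}[W^3] = 0$ is precisely what kills an otherwise-present cross term). Combined with the boundedness of $\{\varphi(x) : x \in \mathcal{X}\}$ and $\hat{\mathcal{S}}$ and $\mathbb{E}[W^6] < \infty$, this gives a uniform bound on $\mathbb{E}[|V_i|^3 \mid X_i, Z_{J,i}]$, supporting the conditional Berry--Esseen estimate
\[
\mathbb{P}[\Delta_k > \tau \mid X^k, Z_J^k] \leq Q\!\left(\tau\sqrt{k}/\hat{\sigma}_\Delta(X^k, Z_J^k)\right) + C/\sqrt{k},
\]
with $\hat{\sigma}_\Delta^2 = 4\sigma_w^2(\bar{D}_J - \sigma_w^2) + \mathbb{E}[W^4] - \sigma_w^4 \to \sigma_\Delta^2 := \mathbb{E}[\mathrm{Var}[\textsf{d}_s(S,Z^\star) \mid X, Z^\star]]$ whenever $\bar{D}_J \to d_s$.

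Second, rather than splitting into ``covering failure'' and ``noise tail'' events and union bounding --- which only yields the loose rate $\lambda_s^\star \sigma_\Delta Q^{-1}(\epsilon_2) + \sqrt{\mathcal{V}(d_s)}\,Q^{-1}(\epsilon - \epsilon_2) > \sqrt{\tilde{\mathcal{V}}(d_s)}\,Q^{-1}(\epsilon)$ for every split $\epsilon_1 + \epsilon_2 = \epsilon$ --- I integrate the two jointly. Introducing an auxiliary $U \sim \mathcal{N}(0, 1)$ independent of everything and using $\mathbb{E}[Q(T)] = \mathbb{P}[U \geq T]$, the excess-distortion probability is upper-bounded by $\mathbb{P}[\bar{D}_J \geq d_s - \sigma_\Delta U/\sqrt{k}] + O(1/\sqrt{k})$, i.e., the probability of surrogate covering failure at the \emph{random} target $d_s - \sigma_\Delta U/\sqrt{k}$. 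Conditional on $U$, applying Kostina--Verdu's direct achievability at this target and Taylor-expanding $R_{S,X}(\cdot)$ and $\mathcal{V}(\cdot)$ at $d_s$ gives $Q\!\left((\xi - \lambda_s^\star \sigma_\Delta U)/\sqrt{\mathcal{V}(d_s)}\right) + o(1)$, with $\xi = (\log M - kR_{S,X}(d_s))/\sqrt{k}$. Taking expectation over $U$ and invoking the identity \eqref{relationship_V_tildeV} collapses the bound to $Q(\xi/\sqrt{\tilde{\mathcal{V}}(d_s)}) + o(1)$. Setting $\xi = \sqrt{\tilde{\mathcal{V}}(d_s)}\,Q^{-1}(\epsilon)$ with an additional $O(\log k / \sqrt{k})$ slack to absorb Berry--Esseen errors then yields the theorem.

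The main obstacle is making the joint reduction rigorous: Kostina--Verdu's direct achievability is typically stated for a deterministic target distortion, while here I need it uniformly over a $k^{-1/2}$-neighborhood of $d_s$ so that the expectation over $U$ closes cleanly. This can be handled by (i) conditioning on $U$ and re-using the same i.i.d.\ codebook for every realization of $u$ (independence of $U$ from the codebook ensures no loss), together with a uniform-in-$u$ Berry--Esseen bound for the CLT on $\jmath_X^k(X^k, d_s - \sigma_\Delta u/\sqrt{k})$, or (ii) a direct perturbation argument exploiting the smoothness of $R_{S,X}(\cdot)$ and $\mathcal{V}(\cdot)$ at $d_s$. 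A secondary issue is the replacement $\hat{\sigma}_\Delta \to \sigma_\Delta$ in the $Q$-function without second-order loss; since $\hat{\sigma}_\Delta^2 - \sigma_\Delta^2 = 4\sigma_w^2(\bar{D}_J - d_s)$ is linear in $\bar{D}_J - d_s$ (which is of order $k^{-1/2}$ on a high-probability event by the same Kostina--Verdu analysis), this can be controlled by truncating on the atypical event $|\bar{D}_J - d_s| \geq k^{-1/4}$, whose probability is $o(1)$.
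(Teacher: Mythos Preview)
Your proposal is correct and follows essentially the same route as the paper. Both (a) apply a conditional Berry--Ess\'een bound on $\textsf{d}_s(S^k,z^k)$ given $(X^k,z^k)$ to reduce to surrogate covering at a perturbed target $d_s - O(k^{-1/2})$, (b) invoke the Kostina--Verd\'u abstract-alphabet covering argument at that perturbed target, and (c) merge the two sources of randomness via a Gaussian integration into the single dispersion $\tilde{\mathcal{V}}(d_s)$. Your auxiliary-$U$ device is exactly the paper's identity \eqref{math_insight} read in reverse; the paper arrives at the $\int_0^1\cdots\,\mathrm{d}t$ form because it starts from the one-shot indirect bound of \cite[Theorem~3]{Kostina2016Nonasymptotic} (Lemma~\ref{Single_Shot_Achi_one_c} here), which is already packaged that way. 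Your ``main obstacle'' (uniformity of the covering bound over a $k^{-1/2}$-neighborhood of $d_s$) is precisely what the paper isolates as Lemma~\ref{Lemma_g_d_tilted_info}, proved by carrying the perturbation $\delta_s(t,k)$ through the argument of \cite[Lemma~2]{Kostina2012Fixed}. The only genuine technical difference is how the variance replacement $\hat\sigma_\Delta\to\sigma_\Delta$ is handled: you propose truncating on $|\bar D_J-d_s|\ge k^{-1/4}$, whereas the paper exploits the linearity $V_k=4\sigma_w^2(\mu_k-\sigma_w^2)+\sigma_{w^2}^2$ to solve the implicit inequality $\mu_k+\sqrt{V_k/k}\,Q^{-1}(\cdot)\le d_s$ exactly (Lemma~\ref{tackle_V_lemma}), which is tidier and avoids the extra truncation layer.
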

\begin{proof}[Proof]
	See Section \ref{proof_theorem_Gaussian_approximation_achi}.
\end{proof}
The result for simultaneously recovering both the hidden and observable sources follows a form similar to that of recovering only the hidden source, as shown in Theorem \ref{theorem_Gaussian_approximation_achi_two}.
\begin{theorem}
	\label{theorem_Gaussian_approximation_achi_two}
	(Second-order Achievability for $(k, M, d_s, d_x, \epsilon)$ Codes): Fix $0 < \epsilon < 1$ and $(d_s, d_x) \in \textsf{int}(\mathcal{D}_{nd})$. Under the source model and distortion measure introduced in Section \ref{Preliminaries}, we have
	\begin{align}
	\label{second_order_general_two}
	\log M^\star(k, d_s, d_x, \epsilon) \leq& k R_{S,X}(d_s, d_x) 
	+ \sqrt{k \tilde{\mathcal{V}}(d_s, d_x)} Q^{-1}(\epsilon)\nonumber \\ &+ O(\log k).
	\end{align}
\end{theorem}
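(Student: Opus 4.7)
The plan for Theorem \ref{theorem_Gaussian_approximation_achi_two} is to mirror the three-step architecture used for Theorem \ref{theorem_Gaussian_approximation_achi} in Section \ref{proof_theorem_Gaussian_approximation_achi} (surrogate covering $\to$ Berry-Esseen for the direct tilted information $\to$ residual concentration and dispersion combination), adapted to the joint-distortion setting by (i) replacing the single distortion level $d_s$ by the pair $(d_s, d_x)$, (ii) replacing the single-distortion direct tilted information $\jmath_X(\cdot; d_s)$ by its two-distortion version $\jmath_X(\cdot; d_s, d_x)$ of \eqref{direct_tilted_information_two}, and (iii) combining the resulting direct dispersion $\mathcal{V}(d_s, d_x)$ with the residual variance $\mathbb{E}[\mathrm{Var}[\textsf{d}_s(S, Z^\star)\mid X, Z^\star]]$ through identity \eqref{relationship_V_tildeV_two} to recover $\tilde{\mathcal{V}}(d_s, d_x)$. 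All analytic inputs of the Theorem \ref{theorem_Gaussian_approximation_achi} argument — boundedness of $\{\varphi(x):\ x\in\mathcal{X}\}$ and $\widehat{\mathcal{S}}$, and the moment conditions on $W$ — carry over unchanged, since $\textsf{d}_x$ enters only through observable quantities.

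I would first fix a slack $\tau_k = \Theta(1/\sqrt{k})$ (tuned at the end), draw $M$ i.i.d.\ codewords $(Z^k(j), Y^k(j))\sim P_{Z^\star Y^\star}^{\otimes k}$ where $P_{Z^\star Y^\star}$ attains $R_{S,X}(d_s, d_x)$, and define the encoder to output the smallest index $j$ such that $\bar{\textsf{d}}_s(x^k, Z^k(j))\leq d_s - \tau_k$ \emph{and} $\textsf{d}_x(x^k, Y^k(j))\leq d_x$ jointly, with a default symbol otherwise. The role of $\tau_k$ is to absorb the random perturbation $\textsf{d}_s(S^k, Z^k) - \bar{\textsf{d}}_s(X^k, Z^k)$, whose conditional mean given $(X^k, Z^k)$ is zero and whose conditional variance equals $\sigma_r^2/k$ with $\sigma_r^2 \triangleq \mathbb{E}[\mathrm{Var}[\textsf{d}_s(S, Z^\star)\mid X, Z^\star]]$. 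Because $\textsf{d}_x$ depends only on the observable source, no analogous slack is needed on the $d_x$ side.

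Next I would invoke a two-distortion generalization of the abstract-alphabet covering lemma \cite[Cor. 11]{Kostina2012Fixed}: combined with the alternative representation \eqref{d_tilted_information_of_surrogate}, it bounds the covering failure probability by a Verdú-Han-type expression $\mathbb{P}[\sum_{i=1}^k \jmath_X(X_i; d_s - \tau_k, d_x) > \log M - O(\log k)]$. Under the boundedness hypotheses on $\{\varphi(x)\}$ and $\widehat{\mathcal{S}}$, together with $(d_s, d_x)\in\textsf{int}(\mathcal{D}_{nd})$ (which guarantees existence and continuity of the partial derivatives \eqref{lambda_s_two}-\eqref{lambda_x} and hence smoothness of $R_{S,X}$ near $(d_s, d_x)$), the per-letter $\jmath_X$ has a finite third absolute moment, so Berry-Esseen applies. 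A first-order Taylor expansion $R_{S,X}(d_s - \tau_k, d_x) = R_{S,X}(d_s, d_x) + \lambda_s^\star \tau_k + O(\tau_k^2)$ then rewrites the covering rate as $kR_{S,X}(d_s, d_x) + k\lambda_s^\star \tau_k + \sqrt{k\mathcal{V}(d_s, d_x)}\, Q^{-1}(\epsilon') + O(\log k)$ for a prescribed covering failure budget $\epsilon'$. For the residual, the per-letter identity $\textsf{d}_s(S_i, Z_i) - \bar{\textsf{d}}_s(X_i, Z_i) = 2(\varphi(X_i) - Z_i)W_i + W_i^2 - \mathbb{E}[W^2]$ together with $\mathbb{E}[W^3] = 0$, $\mathbb{E}[W^6]<\infty$, and boundedness of $\varphi$ and $\widehat{\mathcal{S}}$, supplies the finite third absolute moment needed to apply Berry-Esseen to the residual sum. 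Jointly optimizing $\tau_k$ against the excess-probability budget — exactly as in the proof of Theorem \ref{theorem_Gaussian_approximation_achi} — transmutes $\mathcal{V}(d_s, d_x)$ into $\tilde{\mathcal{V}}(d_s, d_x)$ via \eqref{relationship_V_tildeV_two}, at the cost of an $O(\log k)$ remainder.

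The main obstacle I anticipate is the two-distortion abstract-alphabet covering lemma: the scalar-test version of \cite[Cor. 11]{Kostina2012Fixed} does not literally cover the joint event $\{\bar{\textsf{d}}_s(x^k, z^k)\leq d_s - \tau_k\}\cap\{\textsf{d}_x(x^k, y^k)\leq d_x\}$, and one must verify that the extended exponential tilt in \eqref{d_tilted_information_of_surrogate} — which simultaneously replaces $\textsf{d}_s(s, z)$ by its surrogate $\bar{\textsf{d}}_s(x, z)$ and tilts by $\textsf{d}_x(x, y)$ — still produces the correct concentration for the joint covering probability. A secondary but more delicate technical point is the uniform-in-$\tau_k$ control of the Berry-Esseen remainder and the continuity of $\mathcal{V}(\cdot, d_x)$ and its associated third central moment at $d_s$; these should follow from standard convex-analytic regularity of $R_{S,X}$ on $\textsf{int}(\mathcal{D}_{nd})$, but will require careful bookkeeping to keep the remainder $O(\log k)$.
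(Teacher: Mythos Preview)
Your proposal has a genuine gap in how the direct dispersion $\mathcal{V}(d_s,d_x)$ is combined with the residual variance. With a \emph{fixed} deterministic slack $\tau_k=\Theta(1/\sqrt{k})$ followed by an error-budget split $\epsilon_1+\epsilon_2\le\epsilon$ between covering failure and residual failure, the best second-order coefficient you can reach is $\min_{\epsilon_1+\epsilon_2=\epsilon}\big[\sqrt{\mathcal{V}}\,Q^{-1}(\epsilon_1)+\lambda_s^\star\sigma_r\,Q^{-1}(\epsilon_2)\big]$, and this is strictly larger than $\sqrt{\mathcal{V}+\lambda_s^{\star 2}\sigma_r^2}\,Q^{-1}(\epsilon)=\sqrt{\tilde{\mathcal{V}}}\,Q^{-1}(\epsilon)$ whenever both contributions are positive --- it is the familiar suboptimality of the union bound compared with the exact tail of a sum of independent Gaussians. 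So the step ``jointly optimizing $\tau_k$ \dots transmutes $\mathcal{V}$ into $\tilde{\mathcal{V}}$'' does not go through, and your phrase ``exactly as in the proof of Theorem~\ref{theorem_Gaussian_approximation_achi}'' misreads that proof: it never commits to a single slack. The obstacle you flag (the two-constraint covering lemma) is real but addressable; the dispersion combination is the step that actually fails.

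What the paper does --- for both theorems --- is start from the $\pi$-based one-shot bound (here Lemma~\ref{Single_Shot_Achi_two_c}), which carries an integral over $t\in[0,1]$ instead of a single slack value. For each $t$ the required slack is $\delta_s(t,k)\propto Q^{-1}(t-\zeta/\sqrt{k})$ via Lemma~\ref{tackle_V_lemma} (still applicable since the $\textsf{d}_x$ constraint adds no randomness given $x^k$), and the two-constraint analogue of Lemma~\ref{Lemma_g_d_tilted_info}, stated as Lemma~\ref{Lemma_g_d_tilted_info_two}, controls the joint covering probability by $\sum_i \jmath_X(X_i,d_s,d_x)+k\lambda_s^\star\delta_s(t,k)$. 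The decisive device is identity~\eqref{math_insight}, $\int_0^1 1\{\mu+vQ^{-1}(t)>\chi\}\,\mathrm{d}t=\mathbb{P}[\mu+vG>\chi]$ with $G\sim\mathcal{N}(0,1)$, which converts the $t$-integral into an independent Gaussian summand $\lambda_s^\star\sqrt{kV_1}\,G$; a single Berry--Ess\'een step on $\sum_i\jmath_X(X_i,d_s,d_x)+\lambda_s^\star\sqrt{kV_1}\,G$ then produces the combined variance $\tilde{\mathcal{V}}(d_s,d_x)$ directly, with no budget splitting at all.
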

\begin{proof}[Proof]
	See Appendix \ref{proof_theorem_Gaussian_approximation_achi_two}.
\end{proof}
The bounds established in Theorem \ref{theorem_Gaussian_approximation_achi} and Theorem \ref{theorem_Gaussian_approximation_achi_two} for quadratic lossy source coding exhibit a form analogous to the second-order tight results (i.e., \eqref{second_order_general_introduction} and \eqref{second_order_general_two_introduction}) for finite-alphabet sources presented in \cite[Theorem 5]{Kostina2016Nonasymptotic} and \cite[Theorem 4]{Yang2024Indirect}, respectively.

\section{Proof of Theorem \ref{theorem_Gaussian_approximation_achi}} \label{proof_theorem_Gaussian_approximation_achi}

In this section, we first present the proof of Theorem \ref{theorem_Gaussian_approximation_achi} in Subsection \ref{Proof_of_Theorem_1}, followed by a discussion of the proof in Subsection \ref{Discussion}.

\subsection{Proof of Theorem \ref{theorem_Gaussian_approximation_achi}}
\label{Proof_of_Theorem_1}

We now begin our proof, which is an asymptotic analysis of Lemma \ref{Single_Shot_Achi_one_c} given in Appendix \ref{Auxiliary_Results}. By Lemma \ref{Single_Shot_Achi_one_c}, there exists a $(k, M, d_s, \epsilon')$ code such that
\begin{equation}
\label{achievability_1}
\epsilon' \leq \int_{0}^1 \mathbb{E}\Big[\mathbb{P}^M\left[\pi(X^k, Z^{\star k}) > t | X^k \right]\Big] \mathrm{d} t,
\end{equation}
where function $\pi$ is defined in \eqref{pi_func} in Lemma \ref{Single_Shot_Achi_one_c}, $P_{X^k Z^{\star k}} = P_{X^k} P_{Z^{\star k}}$, $P_{X^k} = P_X \times \dots \times P_X$, $P_{Z^{\star k}} = P_{Z^{\star}} \times \dots \times P_{Z^{\star}}$ with $P_{Z^{\star}}$ achieving the rate-distortion function $R_{S,X}(d_s)$.\footnote{$P_{Z^{\star}}$ achieves $R_{S,X}(d_s)$ means that $P_{Z^\star}=\mathbb{E}_{X\sim P_X}[P_{Z^\star| X}]$, where $P_{Z^\star| X}$ achieves $R_{S,X}(d_s)$.}
For any $\gamma > 0$, we have
\begin{align}
&\int_{0}^1 \mathbb{E}\Big[\mathbb{P}^M\left[\pi(X^k, Z^{\star k}) > t | X^k \right]\Big] \mathrm{d} t \nonumber \\
\leq& e^{-\frac{M}{\gamma}} + \int_{0}^1 \mathbb{P}\Big[\gamma \mathbb{P}\left[\pi(X^k, Z^{\star k}) \leq t | X^k \right] < 1\Big] \mathrm{d} t \label{achi_01} \\
\leq& e^{-\frac{M}{\gamma}} + \int_{\frac{2 \zeta}{\sqrt{k}}}^{1} \mathbb{P}\Big[\gamma \mathbb{P}\left[\pi(X^k, Z^{\star k}) \leq t | X^k \right] < 1\Big] \mathrm{d} t \nonumber \\ 
&+\frac{2 \zeta}{\sqrt{k}}, \label{achi_02}
\end{align}
where \eqref{achi_01} holds by applying $(1 -p)^M \leq e^{-Mp} \leq e^{-\frac{M}{\gamma}}\min\{1, \gamma p\} + 1\{\gamma p < 1\}$ \cite{Yury2012Information, Kostina2016Nonasymptotic}, and in \eqref{achi_02}, we set 
\begin{equation}
\label{zeta}
\zeta = \frac{c_0 T_0}{\sigma_{w^2}^3},
\end{equation}
where the constants $c_0$, $T_0$, $\sigma_{w^2} > 0$ will be specified later.
Let 
\begin{align}
\log M =& \log \gamma + \log \ln \sqrt{k}, \label{log_M} \\
\log \gamma =& kR_{S,X}(d_s) + \sqrt{k \tilde{\mathcal{V}}(d_s)} Q^{-1}(\epsilon) + O(\log k), \label{log_gamma}
\end{align}
with a properly chosen $O(\log k)$. In the following, we show that the right-hand side of \eqref{achi_02} is upper-bounded by $\epsilon$, which implies $\epsilon' \leq \epsilon$, thereby completing the proof.

Let
\begin{align}
\theta(x^k, z^k) \triangleq& \frac{1}{k}\sum_{i=1}^k(\varphi(x_i) - z_i)^2,
\end{align}
\begin{align}
\mu_k(x^k,z^k) \triangleq& \frac{1}{k}\sum_{i=1}^k \mathbb{E}\big[\textsf{d}_s(S_i, z_i)\big|X_i = x_i\big] \nonumber \\ 
=&\theta(x^k, z^k) + \sigma_w^2, \label{mu_k}
\end{align}
\begin{align}
V_k(x^k,z^k) \triangleq& \frac{1}{k}\sum_{i=1}^k \mathrm{Var}\big[\textsf{d}_s(S_i, z_i)\big|X_i = x_i\big] \nonumber \\ 
=&4 \sigma_w^2 \theta(x^k, z^k) + \sigma_{w^2}^2, \label{V_k} 
\end{align}
\begin{align}
&T_k(x^k, z^k) \nonumber \\
\triangleq& \frac{1}{k}\sum_{i=1}^k \mathbb{E}\big[|\textsf{d}_s(S_i, z_i) - \mathbb{E}[\textsf{d}_s(S_i, z_i)|X_i=x_i]|^3\big|X_i = x_i\big] \nonumber \\
=&\frac{1}{k}\sum_{i=1}^k \mathbb{E}\big[|W^2 + 2(\varphi(x_i) - z_i)W - \sigma_w^2|^3\big], \label{T_k}
\end{align}
\begin{align}
B_k(x^k,z^k) \triangleq& \frac{c_0 T_k(x^k, z^k)}{(V_k(x^k,z^k))^{3/2}}, \label{B_k}
\end{align}
where
\begin{align}
\sigma_w \triangleq \sqrt{\mathrm{Var}[W]},\
\sigma_{w^2} \triangleq \sqrt{\mathrm{Var}[W^2]},
\end{align}
and $c_0$ is that in the Berry–Ess\'een theorem.
By the assumption that $\mathbb{E}[W^6] < \infty$ and both sets $\{\varphi(x):\ x \in \mathcal{X} \}$ and $\hat{\mathcal{S}}$ are bounded, there exists a constant $T_0$ such that
\begin{equation}
	T_k(x^k, z^k) \leq T_0, \ \forall x^k \in \mathcal{X}^k,\ z^k \in \hat{\mathcal{S}}^k,\ k > 0.
\end{equation}
Juxtaposing this with the observation that $V_k(x^k,z^k) \geq \sigma_{w^2}^2$, we have
\begin{equation}
\label{Upper_Bounded_B}
B_k(x^k,z^k) \leq \zeta, \ \forall x^k \in \mathcal{X}^k,\ z^k \in \hat{\mathcal{S}}^k,\ k > 0.
\end{equation}

Given the above preparations, for $(x^k,z^k)$ and $t$ such that
\begin{align}
&\mu_k(x^k,z^k) \leq d_s - \sqrt{\frac{V_k(x^k,z^k)}{k}}Q^{-1}\Big(t - \frac{\zeta}{\sqrt{k}}\Big), \label{core_condition_02} \\
& \frac{\zeta}{\sqrt{k}} < t < 1, \label{core_condition_02_01}
\end{align}
we have
\begin{align}
&\pi(x^k,z^k) \nonumber \\ 
\leq&\mathbb{P}\Bigg[\textsf{d}_s(S^k, z^k) > \mu(x^k,z^k) \nonumber \\ 
&+ \sqrt{\frac{V(x^k,z^k)}{k}}Q^{-1}\Big(t - \frac{\zeta}{\sqrt{k}}\Big) \Bigg| X^k=x^k\Bigg] \label{By_Condition_1} \\
\leq&\mathbb{P}\Bigg[\textsf{d}_s(S^k, z^k) > \mu(x^k,z^k) \nonumber \\ 
&+ \sqrt{\frac{V(x^k,z^k)}{k}}Q^{-1}\Bigg(t - \frac{B_k(x^k,z^k)}{\sqrt{k}}\Bigg) \Bigg| X^k=x^k\Bigg] \label{By_max_min} \\
\leq&t, \label{By_berry_esseen}
\end{align}
where \eqref{By_Condition_1} is by \eqref{core_condition_02}, \eqref{By_max_min} is by \eqref{Upper_Bounded_B}, and \eqref{By_berry_esseen} is by the Berry–Ess\'een theorem. The sufficient conditions \eqref{core_condition_02} and \eqref{core_condition_02_01} for \eqref{By_berry_esseen} allows us to relax the bound in \eqref{achi_02}. However, prior to this, we can further strengthen condition \eqref{core_condition_02} to a more tractable form using the following lemma.
\begin{lemma} 
	\label{tackle_V_lemma}
	Fix $d_s > d_{s, \min}$. There exists a constant $k_0$ such that for all $k > k_0$ and $t$ satisfying $\frac{2\zeta}{\sqrt{k}} \leq t \leq 1$, we have \eqref{core_condition_02} holds if
	\begin{align}
	\theta(x^k, z^k) \leq d_s - \sigma_w^2 - \delta_s(t, k),
	\end{align}
	where
	\begin{align}
	\delta_s(t, k) \triangleq& \sqrt{\frac{V_1}{k}}Q^{-1}\Big(t - \frac{\zeta}{\sqrt{k}}\Big), \label{delta__s_t_k} \\
	V_1 \triangleq& 4\sigma_w^2 \big(d_s - \sigma_w^2\big) + \sigma_{w^2}^2. \label{V_1}
	\end{align}
\end{lemma}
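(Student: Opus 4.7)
The plan is to verify the implication directly, exploiting the fact that both $\mu_k$ and $V_k$ are affine functions of the single scalar $\theta(x^k,z^k)$. First I would substitute $\mu_k = \theta + \sigma_w^2$ from \eqref{mu_k} and $V_k = 4\sigma_w^2 \theta + \sigma_{w^2}^2$ from \eqref{V_k} into condition \eqref{core_condition_02}, rewriting it equivalently as
\begin{equation*}
\theta \leq d_s - \sigma_w^2 - \sqrt{V_k(x^k,z^k)/k}\,\beta, \qquad \beta \triangleq Q^{-1}\!\bigl(t - \zeta/\sqrt{k}\bigr).
\end{equation*}
The hypothesis $\theta \leq d_s - \sigma_w^2 - \delta_s(t,k)$ is then exactly this inequality with $V_k$ replaced by the constant $V_1$, which is precisely the value of $V_k$ evaluated at the threshold $\theta = d_s - \sigma_w^2$. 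So the task reduces to comparing $\sqrt{V_k/k}\,\beta$ with $\sqrt{V_1/k}\,\beta$ under the hypothesis, using the monotonicity that $V_k$ is strictly increasing in $\theta$.

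Next I would split on the sign of $\beta$. When $\beta \geq 0$, the hypothesis forces $\theta \leq d_s - \sigma_w^2$, so $V_k \leq V_1$, and consequently $\sqrt{V_k/k}\,\beta \leq \sqrt{V_1/k}\,\beta$, from which \eqref{core_condition_02} follows immediately. When $\beta < 0$, I would sub-case on whether $\theta$ lies below or above the threshold $d_s - \sigma_w^2$. If $\theta < d_s - \sigma_w^2$, then $\mu_k < d_s$ while the right-hand side of \eqref{core_condition_02} exceeds $d_s$ (since $-\sqrt{V_k/k}\,\beta > 0$), so the bound holds trivially. If instead $\theta \geq d_s - \sigma_w^2$, then $V_k \geq V_1$, and because $\beta < 0$ this again yields $\sqrt{V_k/k}\,\beta \leq \sqrt{V_1/k}\,\beta$, and the conclusion follows from the hypothesis.

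The constant $k_0$ plays only a technical role: it ensures $\zeta/\sqrt{k} < 1$ so that the admissible range $[2\zeta/\sqrt{k},\,1]$ for $t$ is non-empty and $\beta = Q^{-1}(t-\zeta/\sqrt{k})$ is well-defined; choosing $k_0 = 4\zeta^2$ suffices. I do not anticipate a substantive obstacle beyond bookkeeping. The one subtlety worth highlighting is that the direction in which $V_k$ deviates from $V_1$ is controlled by the same threshold comparison $\theta \lessgtr d_s - \sigma_w^2$ that flips the sign pattern of $\sqrt{V_k/k}\,\beta - \sqrt{V_1/k}\,\beta$ in exactly the way needed to preserve the target inequality in every case, so the argument closes cleanly.
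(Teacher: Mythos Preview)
Your argument is correct and is genuinely simpler than the paper's proof. The paper proceeds by defining $f(\theta\mid t,k)=\theta+\sqrt{(4\sigma_w^2\theta+\sigma_{w^2}^2)/k}\,\beta$ and showing, for large $k$, that $f$ is strictly increasing in $\theta$; it then identifies \eqref{core_condition_02} with $\theta\le\tilde\theta$, where $\tilde\theta$ is the unique root of $f(\theta\mid t,k)=d_s-\sigma_w^2$, solves the resulting quadratic explicitly, and finally checks via an asymptotic expansion that $d_s-\sigma_w^2-\delta_s(t,k)\le\tilde\theta$. Several of these steps (monotonicity of $f$, the asymptotic comparison) genuinely require $k$ large, which is where the paper's $k_0$ enters. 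Your route bypasses all of this: by observing that $V_1$ is exactly $V_k$ evaluated at the threshold $\theta=d_s-\sigma_w^2$ and that $V_k$ is increasing in $\theta$, the sign-case analysis on $\beta$ shows the implication directly, with $k_0$ needed only so that $\beta$ is well-defined. Your approach is shorter, avoids the quadratic and asymptotic bookkeeping, and in fact delivers the conclusion for every $k$ for which $Q^{-1}(t-\zeta/\sqrt{k})$ makes sense, not merely sufficiently large $k$; the paper's approach, on the other hand, explicitly produces $\tilde\theta$, which could be useful if one wanted a sharper sufficient condition than $\theta\le d_s-\sigma_w^2-\delta_s(t,k)$.
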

\begin{proof}[Proof]
	See Appendix \ref{proof_tackle_V_lemma}.
\end{proof}
Note that when $\lambda_s^{\star} > 0$, by the complementary slackness condition, we have
\begin{align}
	\mathbb{E}\left[  (\varphi(X) - Z^{\star})^2 \right] + \sigma_w^2 = \mathbb{E}\left[\bar{\textsf{d}}_s(X, Z^{\star})\right] = d_s.
\end{align}
Juxtaposing this with \eqref{V_1} and
\begin{align}
	\mathbb{E}\left[\mathrm{Var}\left[\textsf{d}_s(S,Z^\star)|X,Z^\star\right]\right] = 4 \sigma_w^2 \mathbb{E}\left[  (\varphi(X) - Z^{\star})^2 \right] + \sigma_{w^2}^2,
\end{align}
when $d_{s, \min} < d_s < d_{s, \max}$, which implies that $\lambda_s^{\star} > 0$, we have
\begin{align}
\label{EVAR_e_V1}
	V_1 = \mathbb{E}\left[\mathrm{Var}\left[\textsf{d}_s(S,Z^\star)|X,Z^\star\right]\right].
\end{align}


Define
\begin{align}
&g_{Z^{\star k}}(x^k, t) \nonumber \\
\triangleq& \mathbb{P}\bigg[\frac{1}{k}\sum_{i=1}^k(\varphi(x_i) - Z^{\star}_i)^2 + \sigma_w^2 \leq d_s - \delta_s(t, k) \bigg],
\end{align}
where $Z^{\star k}$ is distributed according to the unconditional distribution $P_{Z^{\star k}}$. By \eqref{core_condition_02}-\eqref{By_berry_esseen} and Lemma \ref{tackle_V_lemma}, for all $k > k_0$ and $\frac{2\zeta}{\sqrt{k}} \leq t \leq 1$, we have
\begin{align}
\label{core_01}
\mathbb{P}[\pi(x^k, Z^{\star k}) \leq t] 
\geq g_{Z^{\star k}}(x^k, t),
\end{align}
where $k_0$ is that in Lemma \ref{tackle_V_lemma}. Consequently, for all $k > k_0$, we can relax the second term of \eqref{achi_02} by
\begin{align}
&\int_{\frac{2 \zeta}{\sqrt{k}}}^{1} \mathbb{P}\Big[\gamma \mathbb{P}\left[\pi(X^k, Z^{\star k}) \leq t | X^k \right] < 1\Big] \mathrm{d} t \nonumber \\
\leq&\int_{\frac{2 \zeta}{\sqrt{k}}}^{1} \mathbb{P}\Big[\gamma g_{Z^{\star k}}(X^k, t) < 1\Big] \mathrm{d} t. \label{pro_01}
\end{align}


We now give a lemma to establish a connection between $g_{Z^{\star k}}(x^k, t)$ and the distortion-tilted information $\jmath_{X}(x,d_s)$.

\begin{lemma} 
	\label{Lemma_g_d_tilted_info}
	There exist constants $C$, $T$, $k_6 > 0$ such that for all $k > k_6$, $d_{s, \min} < d_s < d_{s,\max}$, and $\frac{2\zeta}{\sqrt{k}} \leq t \leq 1$,
	\begin{align}
	\label{Fixed_result}
	\mathbb{P}\bigg[&\log\frac{1}{g_{Z^{\star k}}(X^k, t)} \leq \sum_{i=1}^k \jmath_{X}(X_i,d_s) + k \lambda_s^{\star}\delta_s(t, k) \nonumber \\  
	&+ C \log k \bigg] 
	\geq 1 - \frac{T}{\sqrt{k}}.
	\end{align}
\end{lemma}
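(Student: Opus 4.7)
The plan is to lower bound $g_{Z^{\star k}}(x^k, t)$ via a change of measure to the rate-distortion optimal conditional $P^\star_{Z|X=x}$, followed by a Berry--Ess\'een-type local window estimate under this tilted measure. Since $\bar{\textsf{d}}_s(x,z) = (\varphi(x)-z)^2 + \sigma_w^2$, we have $g_{Z^{\star k}}(x^k, t) = \mathbb{P}[\frac{1}{k}\sum_{i=1}^k \bar{\textsf{d}}_s(x_i, Z_i^\star) \leq d_s - \delta_s(t,k)]$ where $Z^{\star k}\sim P_{Z^{\star k}}$. I would introduce the tilted conditional $Q(\mathrm{d}z \mid x) \triangleq \exp\{\lambda_s^\star d_s - \lambda_s^\star \bar{\textsf{d}}_s(x,z) + \jmath_X(x,d_s)\}\, P_{Z^\star}(\mathrm{d}z)$, which is a probability measure by \eqref{direct_tilted_information} and coincides with $P^\star_{Z|X=x}$. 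Applying the Radon--Nikodym identity to the product measures then yields
\begin{equation*}
g_{Z^{\star k}}(x^k,t) = \mathbb{E}_{Q^k}\!\Big[e^{\lambda_s^\star\sum_{i=1}^k \bar{\textsf{d}}_s(x_i, Z_i) - k\lambda_s^\star d_s - \sum_{i=1}^k \jmath_X(x_i, d_s)} \mathbf{1}_E\Big],
\end{equation*}
with $E\triangleq\{\frac{1}{k}\sum_{i=1}^k \bar{\textsf{d}}_s(x_i, Z_i) \leq d_s - \delta_s(t,k)\}$.

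Next, I would restrict the indicator to the narrow strip $A_k \triangleq \{d_s - \delta_s(t,k) - \frac{c_1 \log k}{k} \leq \frac{1}{k}\sum_{i=1}^k \bar{\textsf{d}}_s(x_i, Z_i) \leq d_s - \delta_s(t,k)\}$ for a suitable constant $c_1$, so that on $A_k$, $\lambda_s^\star \sum \bar{\textsf{d}}_s(x_i, Z_i) \geq k\lambda_s^\star(d_s - \delta_s(t,k)) - \lambda_s^\star c_1 \log k$. This produces the bound
\begin{equation*}
\log\frac{1}{g_{Z^{\star k}}(x^k,t)} \leq \sum_{i=1}^k \jmath_X(x_i,d_s) + k\lambda_s^\star \delta_s(t,k) + \lambda_s^\star c_1 \log k + \log\frac{1}{Q^k(A_k\mid x^k)},
\end{equation*}
reducing the problem to showing $\log(1/Q^k(A_k\mid X^k)) = O(\log k)$ with probability at least $1 - T/\sqrt{k}$.

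To this end, let $\mu_Q(x) \triangleq \mathbb{E}_Q[\bar{\textsf{d}}_s(x, Z) \mid X=x]$. By complementary slackness in \eqref{rate_distortion_function_de}, which is valid since $d_s < d_{s,\max}$ forces $\lambda_s^\star > 0$, we have $\mathbb{E}[\mu_Q(X)] = d_s$. Because $\varphi$ and $\hat{\mathcal{S}}$ are bounded, $\mu_Q(X)$ is a bounded random variable, so Hoeffding's inequality yields an event $B_k \subseteq \mathcal{X}^k$ with $\mathbb{P}[B_k] \geq 1 - T/\sqrt{k}$ on which $|\frac{1}{k}\sum \mu_Q(X_i) - d_s| = O(\sqrt{\log k / k})$. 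Conditioned on $x^k \in B_k$, I would apply a Berry--Ess\'een-type local estimate to the $Q^k$-centered sum $\frac{1}{\sqrt{k}}\sum_{i=1}^k [\bar{\textsf{d}}_s(x_i, Z_i) - \mu_Q(x_i)]$: its per-letter third absolute moment is uniformly bounded by boundedness of $\varphi$, $\hat{\mathcal{S}}$ and finiteness of $\mathbb{E}[W^6]$, and its per-letter $Q$-variance is uniformly bounded below by a positive constant. Together with the estimate $|d_s - \delta_s(t,k) - \frac{1}{k}\sum \mu_Q(x_i)| = O(\sqrt{\log k / k})$ on $B_k$, this yields $Q^k(A_k\mid x^k) \geq k^{-C''}$ for a suitable constant $C''$, completing the proof after absorbing constants into $C$.

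The principal obstacle is establishing the uniform positive lower bound on the $Q$-conditional variance of $\bar{\textsf{d}}_s(x, Z) = (\varphi(x) - Z)^2 + \sigma_w^2$ uniformly over $x$ in a set of large $P_X$-measure: one must argue that $P^\star_{Z|X=x}$ spreads enough mass so that the variance of $(\varphi(x) - Z)^2$ is bounded below, which is what allows the Gaussian density at the relevant point to be only $k^{-O(1)}$ rather than decaying more rapidly. This replaces the type-counting/non-degeneracy step of the finite-alphabet proofs in \cite{Kostina2016Nonasymptotic}, and the quadratic structure of $\bar{\textsf{d}}_s$ together with the hypotheses $d_{s,\min}<d_s<d_{s,\max}$, boundedness of $\{\varphi(x):x\in\mathcal{X}\}$ and $\hat{\mathcal{S}}$, and the moment assumptions on $W$, are precisely what make this bound tractable in the continuous-alphabet setting.
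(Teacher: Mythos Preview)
Your change-of-measure plus narrow-window skeleton is exactly the right shape and matches the paper's strategy, but the choice of tilting measure creates a real gap at the ``Berry--Ess\'een-type local estimate'' step.

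You tilt by the \emph{fixed} slope $\lambda_s^\star$, i.e.\ by $Q=P^\star_{Z|X=x}$. Under $Q^k$ the empirical mean of $\bar{\textsf d}_s$ is $\tfrac1k\sum_i\mu_Q(x_i)$, which on your good set $B_k$ differs from the target level $d_s-\delta_s(t,k)$ by $O(\sqrt{\log k/k})$. In standardized units the window $A_k$ therefore sits $O(\sqrt{\log k})$ standard deviations from the mean and has width $O((\log k)/\sqrt{k})$. The Gaussian heuristic indeed gives $Q^k(A_k\mid x^k)=k^{-O(1)}$, but the ordinary Berry--Ess\'een bound only controls the CDF to within $O(1/\sqrt{k})$, which swamps a target probability of order $k^{-c}(\log k)/\sqrt{k}$. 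So the standard Berry--Ess\'een theorem cannot deliver the lower bound you claim; you would need a Cram\'er-type moderate-deviations local estimate, and even then the subtraction of two tail approximations with relative errors of order $(\log k)^{3/2}/\sqrt{k}$ is delicate.

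The paper avoids this entirely by tilting with the \emph{empirical} slope $\lambda^\star_{\hat X,Z^\star,d_s-\delta_s(t,k)}$, i.e.\ the derivative of $R_{\hat X,Z^\star}(\cdot)$ at the \emph{shifted} level, where $\hat X$ is the type of $x^k$. This choice makes the tilted mean land exactly at $d_s-\delta_s(t,k)$, so the window of width $\tau/k$ is centered and a plain Berry--Ess\'een argument gives probability $\geq C_1/\sqrt{k}$ (Lemma~\ref{Counterpart_01}). The price is that the exponent now involves $\Lambda_{Z^\star,d_s}(x_i,\lambda^\star_{\hat X,Z^\star,d_s-\delta_s})$ rather than $\jmath_X(x_i,d_s)=\Lambda_{Z^\star,d_s}(x_i,\lambda_s^\star)$, and a separate continuity lemma (Lemma~\ref{Counterpart_02}, adapted from \cite[Lemma~5]{Kostina2012Fixed}) is needed to show these agree up to $O(\log k)$ with probability $1-O(1/\sqrt{k})$. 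Your approach trades this continuity step for a much harder local-limit step; the paper's trade is the one that closes. Separately, your acknowledged ``principal obstacle'' (the uniform variance lower bound under the tilted law) is also needed in the paper's route and is inherited there from the regularity assumptions of \cite[Lemma~4]{Kostina2012Fixed}; it is not bypassed, but it is not argued from scratch either.
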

\begin{proof}[Proof]
	See Appendix \ref{proof_Lemma_g_d_tilted_info}.
\end{proof}
Then, we can bound the right-hand side of \eqref{pro_01} from above by
\begin{align}
&\int_{\frac{2 \zeta}{\sqrt{k}}}^{1}\!\! \mathbb{P}\Big[\gamma g_{Z^{\star k}}(X^k, t) < 1\Big] \mathrm{d} t \nonumber \\
\leq& \!\int_{\frac{2 \zeta}{\sqrt{k}}}^{1}\!\! \mathbb{P}\Big[\sum_{i=1}^k \jmath_{X}(X_i,d_s) + k \lambda_s^{\star}\delta_s(t, k) \!+\! C \log k \geq \log \gamma \Big] \mathrm{d} t\nonumber \\
& + \Big(1 - \frac{2 \zeta}{\sqrt{k}}\Big)\frac{T}{\sqrt{k}} \label{by_Fixed_result} \\
\leq&\int_{0}^{1} \mathbb{P}\Big[\sum_{i=1}^k \jmath_{X}(X_i,d_s) + \lambda_s^{\star} \sqrt{k V_1}Q^{-1}(t)\nonumber \\
&+ C\log k \geq \log \gamma \Big] \mathrm{d} t + \Big(1 - \frac{2 \zeta}{\sqrt{k}}\Big)\frac{T}{\sqrt{k}} \label{final_02_new} \\
=&\mathbb{P}\Big[\sum_{i=1}^k \jmath_{X}(X_i,d_s) + \lambda_s^{\star} \sqrt{k V_1}G + C_2\log k \geq \log \gamma \Big] \nonumber \\ 
& + \Big(1 - \frac{2 \zeta}{\sqrt{k}}\Big)\frac{T}{\sqrt{k}} \label{final_03_new} \\
\leq& \epsilon - \frac{2 \zeta + 1}{\sqrt{k}}, \label{final_04_new}
\end{align}
where \eqref{by_Fixed_result} is by Lemma \ref{Lemma_g_d_tilted_info}, \eqref{final_02_new} is by the definition \eqref{delta__s_t_k} and a direct relaxation, in \eqref{final_03_new}, we used \cite{Kostina2016Nonasymptotic}
\begin{align}
\int_{0}^1 1\left\{\mu + v Q^{-1}\left(t\right) > \chi \right\} \mathrm{d}t
= \mathbb{P} \left[\mu + v G > \chi \right] \label{math_insight}
\end{align} 
for scalars $\mu$, $\chi$, random variable $G \sim \mathcal{N}(0,1)$, and $v > 0$, and \eqref{final_04_new} is obtained using \eqref{E_j_equals_R}, \eqref{relationship_V_tildeV}, \eqref{EVAR_e_V1}, the Berry–Ess\'een theorem, and by appropriately choosing the remainder term $O(\log k)$ in \eqref{log_gamma}.

Applying \eqref{log_M}, \eqref{log_gamma}, \eqref{pro_01}, and \eqref{final_04_new} to upper-bound the right-hand side of \eqref{achi_02}, we finally obtain $\epsilon' \leq \epsilon$.

\subsection{Discussion of the Proof}
\label{Discussion}

Our proof is an asymptotic analysis of Lemma \ref{Single_Shot_Achi_one_c}. More specifically, through an asymptotic analysis of the upper bound of the excess-distortion probability provided in Lemma \ref{Single_Shot_Achi_one_c}, we showed that the choice of $M$ in \eqref{log_M}, which satisfies \eqref{second_order_general} in Theorem \ref{theorem_Gaussian_approximation_achi}, is compatible with parameters $k$, $d_s$, and $\epsilon$ in terms of the existence of a $(k, M, d_s, \epsilon)$ code, thereby yielding Theorem \ref{theorem_Gaussian_approximation_achi}. As suggested by the form of the upper bound in Lemma \ref{Single_Shot_Achi_one_c}, our analysis can be naturally divided into three progressive but interrelated parts:
\begin{enumerate}
	\item \label{part_1}
	Addressing the randomness of the hidden source sequence $S^k$ given $x^k$ and $z^k$;
	\item \label{part_2}
	Addressing the randomness of the random codeword $Z^{\star k}$ given $x^k$;
	\item \label{part_3}
	Addressing the randomness of the observable source sequence $X^k$.
\end{enumerate}

In the first part of the proof, we showed that $\bar{\textsf{d}}_s(x^k,z^k) \leq d_s - \delta_s(t, k) $ implies $\pi(x^k,z^k) = \mathbb{P}[\textsf{d}_s(S^k, z^k) > d_s| X^k=x^k] \leq t$ for all sufficiently large $k$ by applying Lemma \ref{tackle_V_lemma} and the Berry–Ess\'een theorem. Note that Lemma \ref{tackle_V_lemma} is a consequence of the fact that $V_k(x^k,z^k)$ is a linear function of $\mu_k(x^k,z^k)$, which relies heavily on the considered squared error distortion and the source model. In fact, if another pair of distortion measure and source model makes $V_k(x^k,z^k)$ a linear function of $\mu_k(x^k,z^k)$, the argument in Lemma \ref{tackle_V_lemma} still holds. Besides, the bounded alphabet assumption guarantees that $T_k(x^k, z^k)$ is uniformly bounded for all $x^k \in \mathcal{X}^k$, $z^k \in \hat{\mathcal{S}}^k$, and $k > 0$, ensuring a constant $\zeta$, which significantly simplifies the analysis.

In the second part of the proof, by utilizing the favorable properties of the generalized rate-distortion optimization problem \eqref{generalized_rate_distortion_function}, we derived Lemma \ref{Lemma_g_d_tilted_info}, thereby establishing a relationship between $g_{Z^{\star k}}(x^k, t)=\mathbb{P}[\bar{\textsf{d}}_s(x^k,Z^{\star k}) \leq d_s - \delta_s(t, k) ]$ and the sum of distortion-tilted information $\sum_{i=1}^k \jmath_{X}(x_i,d_s)$. Lemma \ref{Lemma_g_d_tilted_info} is crucial as it characterizes, within a tolerable error, the perturbation in the required number of bits to encode $x^k$ caused by the remaining randomness of the hidden source given $x^k$. Lemma \ref{Lemma_g_d_tilted_info} also provides a form as a sum of independent random variables, facilitating the application of the Berry-Ess\'een theorem in handling the randomness of $X^k$ in the third part of the proof.

Our proof differs significantly from that in \cite{Kostina2016Nonasymptotic}. In \cite{Kostina2016Nonasymptotic}, under the setting of finite alphabets, the functionality of the first part and part of the second part of our proof can be realized conveniently using the change of measure (in \cite[Theorem 4]{Kostina2016Nonasymptotic}) and the type counting (in \cite[Appendix D]{Kostina2016Nonasymptotic}) arguments. Subsequently, using lemmas analogous to Lemma \ref{Counterpart_02}, they obtain the form of the sum of distortion-tilted information. However, in our continuous alphabet setting, the type-based methods no longer work, leading us to seek new proof techniques. As described above, we avoid applying the non-asymptotic result in \cite[Theorem 4]{Kostina2016Nonasymptotic} and instead address the randomness of $S^k$ using Lemma \ref{tackle_V_lemma}, which holds only asymptotically. Due to this different starting point, as well as the continuous-alphabet nature, our subsequent proof also diverges from that in \cite{Kostina2016Nonasymptotic}. Particularly, in contrast to \cite{Kostina2016Nonasymptotic}, we must provide a more thorough characterization of the impact of $\delta_s(t, k)$ in our subsequent proof, as established in Lemma \ref{Lemma_g_d_tilted_info}.

The proof in this section has been extended to accommodate the simultaneous reconstruction of both sources, as outlined in Appendix \ref{proof_theorem_Gaussian_approximation_achi_two}. The key observation behind this extension is that, given $x^k$, $y^k$, and $z^k$, the newly introduced constraint $\textsf{d}_x(x^k, y^k) \leq d_x$ does not induce any additional randomness. Therefore, the approach used to address the randomness of $S^k$ remains applicable. Subsequently, the randomness of the random codeword $(Z^{\star k}, Y^{\star k})$ given $x^k$ is addressed using Lemma \ref{Lemma_g_d_tilted_info_two}, which is generalized from Lemma \ref{Lemma_g_d_tilted_info}. Finally, the randomness of $X^k$ is still addressed using the Berry–Ess\'een theorem.





\section{Conclusion}
\label{Conclusion}

In this paper, we derived second-order achievability bounds for indirect quadratic lossy source coding of a specific class of continuous-alphabet sources, considering both the cases with and without observed source reconstruction. These bounds indicate that the optimal second-order rates of these problems are bounded above by a term of the order $1/\sqrt{k}$. We noticed that the obtained achievability bounds for these problems share the same forms as their second-order tight finite-alphabet counterparts. Moreover, our results offer a method for approximating the finite blocklength performance of the indirect quadratic lossy source coding problem.

While facilitating the proof, the bounded alphabet assumption makes our results inapplicable to the quadratic Gaussian case, where $(S, X)$ follows a non-degenerate joint Gaussian distribution. Nevertheless, as our setting is only one step away from encompassing the quadratic Gaussian case, and given the favorable properties of the Gaussian distribution, we conjecture that similar second-order results could be obtained for the quadratic Gaussian case as well. This may involve, for instance, developing Gaussian type covering lemmas for the indirect lossy source coding problem. Additionally, there is potential to relax the current constraints on the distortion measure and source model. Exploring how to adapt the proofs under these relaxed conditions presents another promising direction for future research.

\appendices

\section{Auxiliary Results}
\label{Auxiliary_Results}

In this Appendix, we list three lemmas that are crucial to our proof. The form of the Berry–Esséen theorem that we adopt is provided in Lemma \ref{Berry_Esseen}.
\begin{lemma} 
	\label{Berry_Esseen}
	(Berry–Ess\'een CLT, e.g., \cite[Theorem 13]{Kostina2012Fixed}, \cite[Chapter XVI.5, Theorem 2]{Feller1971Introduction}):
	Fix an integer $k > 0$. Let random variables $\{Z_i \in \mathbb{R}\}_{i=1}^k $ be independent. Denote
	\begin{align}
	\mu_k =& \frac{1}{k}\sum_{i=1}^k \mathbb{E}[Z_i],\\
	V_k =& \frac{1}{k}\sum_{i=1}^k \mathrm{Var}[Z_i],\\
	T_k =&\frac{1}{k}\sum_{i=1}^k \mathbb{E}\left[|Z_i - \mathbb{E}[Z_i]|^3\right],\\
	B_k =& \frac{c_0T_k}{V_k^{3/2}}, \label{B_E_B_k}
	\end{align}
	where $c_0$ is a positive constant.
	Then, for any real $t$,
	\begin{align}
	\label{P_Berry_Esseen}
	\left|\mathbb{P}\left[\sum_{i=1}^k Z_i > k \left(\mu_k + t\sqrt{\frac{V_k}{k}}\right)\right] - Q(t)\right| \leq \frac{B_k}{\sqrt{k}},
	\end{align}
	where $Q(t)$ denotes the complementary standard Gaussian cumulative distribution function.
\end{lemma}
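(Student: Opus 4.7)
The plan is to follow the classical Fourier-analytic route via Esséen's smoothing inequality. First I would reduce to the standardized setting: set $X_i=(Z_i-\mathbb{E}[Z_i])/\sqrt{kV_k}$, so the $X_i$ are independent with $\sum_{i=1}^k\mathbb{E}[X_i^2]=1$ and $\sum_{i=1}^k\mathbb{E}[|X_i|^3]=T_k/(V_k^{3/2}\sqrt{k})$. Writing $F_k$ for the CDF of $\sum_i X_i$ and $\Phi$ for the standard Gaussian CDF, the claimed bound is equivalent to $\sup_t|F_k(t)-\Phi(t)|\leq B_k/\sqrt{k}$, since the event $\{\sum_i Z_i > k(\mu_k+t\sqrt{V_k/k})\}$ is exactly $\{\sum_i X_i > t\}$.

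Next I would apply Esséen's smoothing lemma: for any $T>0$,
$$\sup_t|F_k(t)-\Phi(t)|\leq \frac{1}{\pi}\int_{-T}^{T}\left|\frac{\phi_k(u)-e^{-u^2/2}}{u}\right|du+\frac{c_1}{T},$$
where $\phi_k(u)=\prod_i\mathbb{E}[e^{iuX_i}]$ is the characteristic function of $\sum_i X_i$ and $c_1$ is a universal constant coming from the smoothing kernel. The work then reduces to bounding $|\phi_k(u)-e^{-u^2/2}|$ on $|u|\leq T$: Taylor-expanding each factor to third order, using $\mathbb{E}[X_i]=0$ together with the estimate $|\mathbb{E}[e^{iuX_i}]-1+u^2\mathbb{E}[X_i^2]/2|\leq |u|^3\mathbb{E}[|X_i|^3]/6$, then taking logarithms and summing over $i$, one obtains
$$|\phi_k(u)-e^{-u^2/2}|\leq c_2\,|u|^3\,\frac{T_k}{V_k^{3/2}\sqrt{k}}\,e^{-u^2/3}$$
uniformly for $|u|\leq T$, provided $T$ is chosen so that each $uX_i$ stays small, i.e., $T\asymp V_k^{3/2}\sqrt{k}/T_k$.

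Substituting this into the smoothing inequality, the integral is bounded by a constant multiple of $T_k/(V_k^{3/2}\sqrt{k})$ after evaluating $\int_{-\infty}^{\infty}|u|^2 e^{-u^2/3}\,du$, while the $c_1/T$ term contributes another term of the same order; together they combine to the required $c_0 T_k/(V_k^{3/2}\sqrt{k})=B_k/\sqrt{k}$. The main obstacle is the bookkeeping needed to obtain a \emph{universal} constant $c_0$ independent of the underlying distributions: one must carefully separate the Taylor regime from the tail $|u|\sim T$ (where the crude bound $|\phi_{X_i}(u)|\leq 1$ is used directly), and invoke Lyapunov's inequality $(\sum_i\mathbb{E}[X_i^2])^{3/2}\leq \sqrt{k}\sum_i\mathbb{E}[|X_i|^3]$ to ensure $T\geq 1$, without which the smoothing bound would be vacuous. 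Since the inequality as stated coincides with the form in \cite[Chapter XVI.5, Theorem 2]{Feller1971Introduction}, I would simply invoke that reference for the admissible value of $c_0$ rather than optimize the constant here.
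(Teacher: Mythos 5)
The paper does not prove this lemma at all: it is an auxiliary result imported verbatim by citation (Kostina--Verd\'u, and Feller, Ch.~XVI.5, Thm.~2), so there is no internal proof to compare against. Your sketch is exactly the classical Ess\'een smoothing-inequality argument that underlies the cited Feller theorem --- standardize, bound $|\phi_k(u)-e^{-u^2/2}|$ by the Lyapunov ratio on $|u|\leq T\asymp V_k^{3/2}\sqrt{k}/T_k$, and pay $c_1/T$ for smoothing --- so it is consistent with (indeed, it reconstructs) the source the paper relies on. The only step I would flag is ``taking logarithms and summing'': in the non-i.i.d.\ case one cannot guarantee each factor stays in a region where the logarithm expansion is valid just by choosing $T\asymp 1/\sum_i\mathbb{E}[|X_i|^3]$ (a single summand may carry most of the variance), which is why Feller instead compares the products directly via a telescoping bound $\bigl|\prod_i\alpha_i-\prod_i\beta_i\bigr|\leq\sum_i|\alpha_i-\beta_i|\cdot(\text{tail bounds on the remaining factors})$; since you ultimately defer to the reference for the admissible universal constant $c_0$, this is a presentational rather than substantive gap.
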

Lemmas \ref{Single_Shot_Achi_one_c} and \ref{Single_Shot_Achi_two_c} give the $k$-fold versions of the one-shot achievability bounds in \cite[Theorem 3]{Kostina2016Nonasymptotic} and \cite[Theorem 1]{Yang2024Indirect}, respectively. Theorems \ref{theorem_Gaussian_approximation_achi} and \ref{theorem_Gaussian_approximation_achi_two} are obtained by asymptotic analysis of these bounds. The underlying random-coding-based achievability schemes can be found in the proofs of Lemmas \ref{Single_Shot_Achi_one_c} and \ref{Single_Shot_Achi_two_c}.
\begin{lemma} 
	\label{Single_Shot_Achi_one_c}
	(\cite[Theorem 3]{Kostina2016Nonasymptotic}):
	For any $P_{\bar{Z}^k}$ on $\hat{\mathcal{S}}^k$, there exists a $(k, M, d_s, \epsilon)$ code with
	\begin{equation}
	\label{Single_Shot_Achi_one_c_eq}
	\epsilon \leq \int_{0}^1 \mathbb{E}\Big[\mathbb{P}^M\left[\pi(X^k, \bar{Z}^k) > t | X^k \right]\Big] \mathrm{d} t,
	\end{equation}
	where $P_{X^k \bar{Z}^k} = P_{X^k}P_{\bar{Z}^k}$, and
	\begin{equation}
	\label{pi_func}
	\pi(x^k,z^k) \triangleq \mathbb{P}[\textsf{d}_s(S^k, z^k) > d_s| X^k=x^k].
	\end{equation}
\end{lemma}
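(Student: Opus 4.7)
The plan is a random-coding argument that exploits a key side-information advantage of the encoder: although the encoder does not observe $S^k$, it knows the conditional law $P_{S^k|X^k}$, so the functional $\pi(x^k,z^k) = \mathbb{P}[\textsf{d}_s(S^k,z^k) > d_s \mid X^k = x^k]$ is a deterministic, computable function of its arguments. This lets $\pi$ serve as a soft per-codeword cost that replaces the unavailable hard indicator of excess distortion. The bound is then obtained by averaging, applying the layer-cake identity, exploiting the independence of the random codewords, and derandomizing.

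First I would construct the random code. Draw $\bar{Z}^k(1),\ldots,\bar{Z}^k(M)$ i.i.d.\ from $P_{\bar{Z}^k}$, independently of $(X^k,S^k)$. Define the encoder by the minimum-$\pi$ rule $\textsf{f}(x^k) \in \arg\min_{m \in \{1,\ldots,M\}} \pi(x^k, \bar{Z}^k(m))$ with ties broken arbitrarily, and let $\textsf{g}(m) = \bar{Z}^k(m)$. Conditioning on the codebook and on $X^k = x^k$, the indicator $\mathbf{1}\{\textsf{d}_s(S^k, \bar{Z}^k(\textsf{f}(x^k))) > d_s\}$ has expectation equal to $\pi(x^k, \bar{Z}^k(\textsf{f}(x^k))) = \min_m \pi(x^k, \bar{Z}^k(m))$, directly from the definition of $\pi$ and the fact that $\textsf{f}(x^k)$ is determined by $x^k$ and the codebook alone.

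Next I would average over $X^k$ and the random codebook. Since $\min_m \pi(x^k, \bar{Z}^k(m)) \in [0,1]$, the layer-cake identity gives
\begin{equation*}
\mathbb{E}_{\mathrm{cb}}\!\left[\min_m \pi(x^k, \bar{Z}^k(m))\right] = \int_0^1 \mathbb{P}_{\mathrm{cb}}\!\left[\min_m \pi(x^k, \bar{Z}^k(m)) > t\right] \mathrm{d} t,
\end{equation*}
and by the i.i.d.\ generation of the codewords the integrand factorizes as $\bigl(\mathbb{P}_{\bar{Z}^k}[\pi(x^k, \bar{Z}^k) > t]\bigr)^M = \mathbb{P}^M[\pi(x^k,\bar{Z}^k) > t]$. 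Taking a further expectation over $X^k \sim P_{X^k}$ and applying Fubini to swap the $t$-integral with the $X^k$-expectation reproduces exactly the right-hand side of \eqref{Single_Shot_Achi_one_c_eq} as an upper bound on the expected excess distortion probability of the random code; the product measure $P_{X^k \bar{Z}^k} = P_{X^k} P_{\bar{Z}^k}$ appearing in the statement corresponds to the independence between $X^k$ and the generic codeword $\bar{Z}^k$.

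Finally I would apply the probabilistic method: because the average over codebooks of the excess distortion probability is bounded by the claimed quantity, at least one deterministic realization of the codebook, and hence a deterministic $(k,M,d_s,\epsilon)$ code, meets the bound. There is no real obstacle here — the crucial conceptual step is recognizing that $\pi$, being a function only of the encoder-visible pair $(x^k,z^k)$, can be used as a surrogate cost and that the minimum-$\pi$ encoder is therefore implementable. The only mild technical points are measurability of $\pi(X^k,\bar{Z}^k)$ (which follows from the measurability of $\textsf{d}_s$ and standard product-measure arguments) and the interchange of integration orders, which is valid because the integrand is nonnegative and bounded.
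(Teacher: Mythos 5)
Your proof is correct and is essentially the argument behind the cited result \cite[Theorem 3]{Kostina2016Nonasymptotic} (which the paper invokes without reproving): i.i.d.\ codebook generation, the minimum-$\pi$ encoder using $\pi$ as an encoder-computable surrogate cost, the layer-cake identity with the $M$-fold factorization, and derandomization. No gaps worth noting.
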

\begin{lemma} 
	\label{Single_Shot_Achi_two_c}
	(\cite[Theorem 1]{Yang2024Indirect}):
	For any $P_{\bar{Z}^k\bar{Y}^k}$ on $\widehat{\mathcal{S}}^k \times \widehat{\mathcal{X}}^k$, there exists a $(k, M, d_s,d_x, \epsilon)$ code with
	\begin{equation}
	\label{Single_Shot_Achi_two_c_eq}
	\epsilon \leq \int_{0}^1 \mathbb{E}\Big[\mathbb{P}^M\left[\pi(X^k, \bar{Z}^k, \bar{Y}^k) > t | X^k \right]\Big] \mathrm{d} t,
	\end{equation}
	where $P_{X^k \bar{Z}^k \bar{Y}^k} = P_{X^k} P_{\bar{Z}^k \bar{Y}^k}$ and
	\begin{equation}
	\begin{aligned}
	\pi(x^k,z^k,y^k) =& \mathbb{P}[\textsf{d}_s(S^k, z^k) > d_s\\ 
	&\cup \textsf{d}_x(X^k, y^k) > d_x | X^k=x^k].
	\end{aligned}
	\end{equation}
\end{lemma}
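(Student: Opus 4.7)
The plan is to carry the three-stage architecture used to prove Theorem \ref{theorem_Gaussian_approximation_achi} over to the starting point Lemma \ref{Single_Shot_Achi_two_c}. I would take $P_{\bar Z^k \bar Y^k}=P_{Z^{\star k}Y^{\star k}}$ as the $k$-fold product of the joint optimizer of \eqref{rate_distortion_function}, and set $\log M=\log\gamma+\log\ln\sqrt{k}$ with $\log\gamma=kR_{S,X}(d_s,d_x)+\sqrt{k\tilde{\mathcal{V}}(d_s,d_x)}Q^{-1}(\epsilon)+O(\log k)$. The key structural observation, highlighted at the end of Section \ref{Discussion}, is that given $(x^k,z^k,y^k)$ the event $\{\textsf{d}_x(X^k,y^k)>d_x\}$ is already deterministic, so $\pi(x^k,z^k,y^k)=\mathbb{P}[\textsf{d}_s(S^k,z^k)>d_s\mid X^k=x^k]$ whenever $\textsf{d}_x(x^k,y^k)\leq d_x$, and equals $1$ otherwise. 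Consequently, the Part \ref{part_1} argument (Lemma \ref{tackle_V_lemma} together with Berry--Ess\'een on the i.i.d.\ sum governing $\textsf{d}_s(S^k,z^k)$) transfers verbatim: for $k$ large and $t\geq 2\zeta/\sqrt{k}$, $\pi(x^k,z^k,y^k)\leq t$ as soon as $\textsf{d}_x(x^k,y^k)\leq d_x$ and $\theta(x^k,z^k)\leq d_s-\sigma_w^2-\delta_s(t,k)$.

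Next I would define the joint counterpart
\begin{align*}
g_{Z^{\star k}Y^{\star k}}(x^k,t)\triangleq \mathbb{P}\bigl[&\bar{\textsf{d}}_s(x^k,Z^{\star k})\leq d_s-\delta_s(t,k),\\
&\textsf{d}_x(x^k,Y^{\star k})\leq d_x\bigr],
\end{align*}
so that $\mathbb{P}[\pi(x^k,Z^{\star k},Y^{\star k})\leq t\mid X^k=x^k]\geq g_{Z^{\star k}Y^{\star k}}(x^k,t)$, in direct analogy with \eqref{core_01}. The heart of the extension is a two-constraint version of Lemma \ref{Lemma_g_d_tilted_info}, namely Lemma \ref{Lemma_g_d_tilted_info_two}, asserting that with probability at least $1-T/\sqrt{k}$,
\[
\log\frac{1}{g_{Z^{\star k}Y^{\star k}}(X^k,t)}\leq \sum_{i=1}^k\jmath_X(X_i,d_s,d_x)+k\lambda_s^\star\delta_s(t,k)+C\log k.
\]
Granting this, I would repeat the chain \eqref{pro_01}--\eqref{final_04_new}, using identity \eqref{math_insight} to convert the $t$-integral into a Gaussian probability and Berry--Ess\'een on $\sum_i\jmath_X(X_i,d_s,d_x)$, whose mean is $R_{S,X}(d_s,d_x)$ by \eqref{E_j_equals_R_two} and whose per-letter variance is $\mathcal{V}(d_s,d_x)$. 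The Gaussian variance absorbing the $\lambda_s^\star\delta_s(t,k)$ perturbation equals $\mathcal{V}(d_s,d_x)+(\lambda_s^\star)^2\mathbb{E}[\mathrm{Var}[\textsf{d}_s(S,Z^\star)\mid X,Z^\star]]=\tilde{\mathcal{V}}(d_s,d_x)$ by \eqref{relationship_V_tildeV_two}; complementary slackness, available because $(d_s,d_x)\in\textsf{int}(\mathcal{D}_{nd})$, is what reidentifies the linear coefficient $V_1$ from Lemma \ref{tackle_V_lemma} as $\mathbb{E}[\mathrm{Var}[\textsf{d}_s(S,Z^\star)\mid X,Z^\star]]$ in this joint setting.

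The main obstacle will be proving Lemma \ref{Lemma_g_d_tilted_info_two}. In the single-constraint case, the analog lemma rides on smoothness of the one-parameter function $R_{S,X}(\cdot)$ at $d_s$ together with the tilted-information identity \eqref{direct_tilted_information_as}. Here I would work with the two-parameter function $R_{S,X}(\cdot,\cdot)$ evaluated at the perturbed point $(d_s-\delta_s(t,k),d_x)$: Taylor-expand in the first argument at $(d_s,d_x)$ using $\lambda_s^\star=-\partial R_{S,X}/\partial d_s$ from \eqref{lambda_s_two}, invoke the two-multiplier tilted identity \eqref{d_tilted_information_of_surrogate} in which both $\lambda_s^\star$ and $\lambda_x^\star$ appear, and argue that the perturbation $\delta_s(t,k)=O(1/\sqrt{k})$ produces only an $O(\log k)$ deviation after $k$-folding. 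The interior hypothesis $(d_s,d_x)\in\textsf{int}(\mathcal{D}_{nd})$ is precisely what is needed to guarantee differentiability at $(d_s,d_x)$ and strict positivity/finiteness of both $\lambda_s^\star$ and $\lambda_x^\star$; without it, either the joint minimizer degenerates or one of the slackness conditions loses force, breaking both the variance identification and the Taylor expansion.
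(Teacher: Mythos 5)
Your proposal does not prove the statement in question. Lemma \ref{Single_Shot_Achi_two_c} is the \emph{non-asymptotic} random-coding achievability bound itself: for an arbitrary codeword distribution $P_{\bar{Z}^k\bar{Y}^k}$ and \emph{any} $M$, a $(k,M,d_s,d_x,\epsilon)$ code exists with $\epsilon \leq \int_0^1 \mathbb{E}\big[\mathbb{P}^M[\pi(X^k,\bar{Z}^k,\bar{Y}^k)>t\,|\,X^k]\big]\,\mathrm{d}t$. Nothing in its proof involves the optimizer $P_{Z^\star Y^\star}$, the choice $\log M=\log\gamma+\log\ln\sqrt{k}$, Lemma \ref{tackle_V_lemma}, Lemma \ref{Lemma_g_d_tilted_info_two}, Berry--Ess\'een, or the dispersion $\tilde{\mathcal{V}}(d_s,d_x)$. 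What you have sketched is the asymptotic analysis that \emph{uses} this lemma, i.e., essentially the paper's proof of Theorem \ref{theorem_Gaussian_approximation_achi_two} in Appendix \ref{proof_theorem_Gaussian_approximation_achi_two}, while taking the lemma itself as your ``starting point.'' That is circular with respect to the assigned statement: the very bound to be established is assumed.

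What is actually required is a short random-coding argument (the paper does not reproduce it but cites \cite[Theorem 1]{Yang2024Indirect}; it parallels \cite[Theorem 3]{Kostina2016Nonasymptotic}). Generate $M$ codeword pairs $(\bar{Z}^k(m),\bar{Y}^k(m))$, $m=1,\dots,M$, i.i.d.\ from $P_{\bar{Z}^k\bar{Y}^k}$ and independent of $(S^k,X^k)$; let the encoder output $\textsf{f}(x^k)=\arg\min_m \pi(x^k,\bar{z}^k(m),\bar{y}^k(m))$ and the decoders output $\textsf{g}_s(m)=\bar{z}^k(m)$, $\textsf{g}_x(m)=\bar{y}^k(m)$. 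Since, conditioned on $X^k=x^k$ and the codebook, the joint excess-distortion probability equals $\min_m \pi(x^k,\bar{z}^k(m),\bar{y}^k(m))$ (here the observability of $X^k$ at the encoder and the Markov structure matter: $S^k$ depends on the codebook only through $X^k$), averaging over the codebook and using $\mathbb{E}[U]=\int_0^1\mathbb{P}[U>t]\,\mathrm{d}t$ for $U\in[0,1]$ together with the conditional independence of the $M$ draws gives $\mathbb{E}\big[\min_m\pi\big]=\int_0^1\mathbb{E}\big[\mathbb{P}^M[\pi(X^k,\bar{Z}^k,\bar{Y}^k)>t\,|\,X^k]\big]\,\mathrm{d}t$; a deterministic codebook no worse than the ensemble average then exists. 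None of this appears in your proposal, so as a proof of Lemma \ref{Single_Shot_Achi_two_c} it has a fundamental gap, even though the asymptotic machinery you describe is broadly the right outline for Theorem \ref{theorem_Gaussian_approximation_achi_two}.
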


\section{Proof of Lemma \ref{tackle_V_lemma}} \label{proof_tackle_V_lemma}

	In this proof, we abbreviate $\theta(x^k, z^k)$ to $\theta$ for brevity. By the definitions in \eqref{mu_k} and \eqref{V_k}, \eqref{core_condition_02} can be rewritten as
\begin{align}
\label{rewrite_01}
f(\theta| t, k) \triangleq \theta + \sqrt{\frac{4 \sigma_w^2 \theta + \sigma_{w^2}^2}{k}}Q^{-1}\Big(t - \frac{\zeta}{\sqrt{k}}\Big) \leq d_s - \sigma_w^2.
\end{align}

By applying $Q(x) \leq e^{-x^2/2}$, $\forall x > 0$, for all $\frac{2\zeta}{\sqrt{k}} \leq t \leq 1$, we have
\begin{equation}
\label{Q_inv_bound}
\Big|Q^{-1}\Big(t - \frac{\zeta}{\sqrt{k}}\Big)\Big| \leq \sqrt{\ln k - 2 \ln \zeta}.
\end{equation}
By \eqref{Q_inv_bound}, for all $\frac{2\zeta}{\sqrt{k}} \leq t \leq 1$, we have
\begin{align}
&f(d_{s, \min} - \sigma_w^2 | t, k) \nonumber \\
\leq&d_{s, \min} - \sigma_w^2 \nonumber \\ 
&+ \sqrt{\frac{(4 \sigma_w^2 (d_{s, \min} - \sigma_w^2) + \sigma_{w^2}^2) \cdot (\ln k - 2 \ln \zeta)}{k}}. \label{u1_01}
\end{align}
This, combined with the setting $d_s > d_{s, \min}$, implies that there exists $k_1 > 0$ such that for all $k > k_1$,
\begin{align}
\label{condition_1}
f(d_{s, \min} - \sigma_w^2 | t, k) < d_s - \sigma_w^2.
\end{align}

Further, for all $\frac{2\zeta}{\sqrt{k}} \leq t \leq 1$ and $\theta \geq 0$, we have 
\begin{align}
f'(\theta| t, k) =& 1 + \frac{2\sigma_w^2}{\sqrt{(4 \sigma_w^2 \theta + \sigma_{w^2}^2)k}} Q^{-1}\Big(t - \frac{\zeta}{\sqrt{k}}\Big) \label{a_01} \\
\geq& 1 - 2\sigma_w^2\sqrt{\frac{\ln k - 2 \ln \zeta}{(4 \sigma_w^2 \theta + \sigma_{w^2}^2)k}}, \label{a_02}
\end{align}
where \eqref{a_01} follows from a direct computation and \eqref{a_02} is by \eqref{Q_inv_bound}. As a consequence, there exists $k_2 > 0$ such that for all $k > k_2$, 
\begin{align}
\label{condition_2}
f'(\theta| t, k) > 0, \ \forall \theta \geq 0.
\end{align}
Since $f(\theta | t, k)$ is dominated by its first term $\theta$, there exist $\theta_0$, $k_3 > 0$ such that for all $k  > k_3$, we have
\begin{align}
\label{condition_3}
f(\theta_0 | t, k) > d_s - \sigma_w^2.
\end{align}

Note that by the definition of $d_{s, \min}$, there exist $x^k \in \mathcal{X}^k$ and $z^k \in \hat{\mathcal{S}}^k$ such that $\theta(x^k, z^k) + \sigma_w^2 = \frac{1}{k}\sum_{i=1}^k \mathbb{E}[\textsf{d}_s(S_i, z_i)\big|X_i = x_i] \leq d_{s, \min}$. 
Juxtaposing this with \eqref{condition_1}, \eqref{condition_2}, and \eqref{condition_3}, we now can say that, for all $k > \max\{k_1, k_2, k_3 \}$ and $t$ satisfying $\frac{2\zeta}{\sqrt{k}} \leq t \leq 1$, \eqref{rewrite_01} is equivalent to
\begin{equation}
\theta \leq \tilde{\theta},
\end{equation}
where $\tilde{\theta}$ is the unique root of equation
\begin{equation}
\label{f_equation}
f(\theta| t, k) = d_s - \sigma_w^2.
\end{equation}

We now calculate $\tilde{\theta}$. By \eqref{f_equation} and the definition of function $f(\theta| t, k)$, the quadratic equation with respect to $\theta$,
\begin{equation}
(d_s - \sigma_w^2 - \theta)^2 = \frac{4 \sigma_w^2 \theta + \sigma_{w^2}^2}{k}\Big[Q^{-1}\Big(t - \frac{\zeta}{\sqrt{k}}\Big)\Big]^2,
\end{equation}
has two roots, as shown in \eqref{roots} at the top of the next page,
\begin{figure*}[!t]
	\begin{align}
	\label{roots}
	\bar{\theta} = d_s - \sigma_w^2 + \frac{2\sigma_w^2 \Big[Q^{-1}\Big(t - \frac{\zeta}{\sqrt{k}}\Big)\Big]^2}{k}
	\pm \sqrt{\frac{V_1}{k} + \frac{4 \sigma_w^4 \Big[Q^{-1}\Big(t - \frac{\zeta}{\sqrt{k}}\Big)\Big]^2}{k^2}} \Big|Q^{-1}\Big(t - \frac{\zeta}{\sqrt{k}}\Big)\Big|
	\end{align}
	\hrulefill
\end{figure*}
and $\tilde{\theta}$ is one of them. Since the two roots are the same when $t = 1/2 + \frac{\zeta}{\sqrt{k}}$, we now focus on the case when $t \neq 1/2 + \frac{\zeta}{\sqrt{k}}$. By \eqref{Q_inv_bound}, there exists $k_4 > 0$ such that for $k > k_4$ and $t \neq 1/2 + \frac{\zeta}{\sqrt{k}}$, $(d_s - \sigma_w^2 - \bar{\theta})$ is always opposite in sign to the last term on the right side of \eqref{roots}. Combining this with the fact that $(d_s - \sigma_w^2 - \tilde{\theta})$ is always of the same sign as $Q^{-1}\big(t - \frac{\zeta}{\sqrt{k}}\big)$ (by \eqref{f_equation}), we obtain \eqref{mu_u_02}.
\begin{figure*}[!t]
	\begin{align}
	\label{mu_u_02}
	\tilde{\theta}
	= d_s - \sigma_w^2 + \frac{2\sigma_w^2 \Big[Q^{-1}\Big(t - \frac{\zeta}{\sqrt{k}}\Big)\Big]^2}{k}
	- \sqrt{\frac{V_1}{k} + \frac{4 \sigma_w^4 \Big[Q^{-1}\Big(t - \frac{\zeta}{\sqrt{k}}\Big)\Big]^2}{k^2}} Q^{-1}\Big(t - \frac{\zeta}{\sqrt{k}}\Big)
	\end{align}
	\hrulefill
\end{figure*}

It remains to show that there exists $k_5>0$ such that $d_s - \sigma_w^2 - \delta_s(t, k) \leq \tilde{\theta}$ for all $k > k_5$. This is true by noting that $\tilde{\theta}$ can be written as 
\begin{align}
	\tilde{\theta} = d_s - \sigma_w^2 - \delta_s(t, k)  + O\Big(\frac{\log k}{k}\Big),
\end{align}
where the term $O\big(\frac{\log k}{k}\big)$ is dominated by the term $\frac{2\sigma_w^2 [Q^{-1}(t - \frac{\zeta}{\sqrt{k}})]^2}{k}$ and thus is positive for all $k > k_5$ with a properly chosen $k_5$. The proof is finally completed by letting $k_0 = \max\{k_1,k_2,k_3,k_4,k_5\}$.

\section{Proof of Lemma \ref{Lemma_g_d_tilted_info}} \label{proof_Lemma_g_d_tilted_info}

This proof is generalized from the proof of \cite[Lemma 2]{Kostina2012Fixed} and we provide only a brief sketch. We begin by introducing several necessary definitions. Define optimization problem
\begin{subequations}
	\label{generalized_rate_distortion_function}
	\begin{align}
	R_{X,Y}(d) \triangleq \min_{P_{Z|X}}\ &D(P_{Z|X}||P_Y|P_X) \\
	\mathrm{s.t.}\
	&\mathbb{E}\left[\bar{\textsf{d}}_s(X, Z)\right] \leq d.
	\end{align}
\end{subequations}
Define for any $\lambda \geq 0$
\begin{align}
\label{Lambda_Y_d}
\Lambda_{Y, d}(x,\lambda) \triangleq \log \frac{1}{\mathbb{E}[\exp(\lambda d - \lambda d(x,Y) )]}.
\end{align}
For $d > d_{\min|X,Y} \triangleq \inf\{d: R_{X,Y}(d) < \infty \}$, $R_{X,Y}(d)$ is always achieved by a $P_{Z^\star|X}$ that satisfies \cite{Kostina2012Fixed}
\begin{align}
\label{generalized_tilted_information_property}
\log \frac{\mathrm{d} P_{Z^\star| X=x}}{\mathrm{d} P_{Y}}(y) =& \Lambda_{Y, d}(x,\lambda_{X,Y,d}^{\star}) \nonumber \\ 
&- \lambda_{X,Y,d}^{\star} d(x,y) + \lambda_{X,Y,d}^{\star} d,
\end{align}
where $\lambda_{X,Y,d}^{\star} = - R_{X,Y}^{\prime}(d)$.

Given the above notations, following the proof of \cite[Lemma 1]{Kostina2012Fixed}, for any $\nu > 0$, $P_{\hat{X}}$ on $\mathcal{X}$, and $P_{Y}$ on $\hat{\mathcal{S}}$ such that $d_{\min|\hat{X},Y} \leq d_s - \delta_s(t, k)$, we have
\begin{align}
&\mathbb{P}[\bar{\textsf{d}}_s(x, Y) \leq d_s - \delta_s(t, k)]  \nonumber \\
\geq& \exp\big(- \Lambda_{Y, d_s - \delta_s(t, k) }\big(x,\lambda_{\hat{X},Y,d_s - \delta_s(t, k)}^{\star}\big) - \lambda_{\hat{X},Y,d_s - \delta_s(t, k)}^{\star} \nu \big) \nonumber \\ 
&\cdot \mathbb{P}\big[d_s - \delta_s(t, k) - \nu < \bar{\textsf{d}}_s(x, Z^{\star}) \leq d_s - \delta_s(t, k) | \hat{X} = x \big], \label{Lemma4_01} 
\end{align}
where $P_{Z^\star|\hat{X}}$ achieves $R_{\hat{X},Y}(d_s - \delta_s(t, k))$.

By setting $\nu = \tau / k$ with fixed $\tau > 0$, $P_{Y} = P_{Z^{\star k}}$, $P_{\hat{X}} = P_{\hat{X}^k} = P_{\hat{X}} \times \dots \times P_{\hat{X}}$, where $P_{\hat{X}}$ is the measure on $\mathcal{X}$ generated by the empirical distribution of $x^k \in \mathcal{X}^k$, i.e., 
\begin{equation}
\label{P_hat_X}
P_{\hat{X}}(a) = \frac{1}{k} \sum_{i=1}^k 1\{x_i = a\},
\end{equation}
we have \eqref{Lemma4_02} and \eqref{Lemma4_03} at the top of the next page,
\begin{figure*}[!t]
	\begin{align}
	&\mathbb{P}\Big[\frac{1}{k}\sum_{i=1}^k\bar{\textsf{d}}_s(x_i, Z_i^{\star}) \leq d_s - \delta_s(t, k)\Big]  \nonumber \\
	\geq&\exp\Big(- \sum_{i=1}^k \Lambda_{Z^{\star}, d_s - \delta_s(t, k) }\big(x_i,\lambda_{\hat{X},Z^{\star},d_s - \delta_s(t, k)}^{\star}\big)
	- \lambda_{\hat{X},Z^{\star},d_s - \delta_s(t, k)}^{\star} \tau \Big) \cdot \mathbb{P}\Big[kd_s - k\delta_s(t, k) - \tau \nonumber \\ 
	&< \sum_{i=1}^k\bar{\textsf{d}}_s(x_i, Z_i^{\star}) \leq kd_s - k\delta_s(t, k) | \hat{X}^k = x^k \Big], \label{Lemma4_02}            \\
	=&\exp\Big(- \sum_{i=1}^k \Lambda_{Z^{\star}, d_s }\big(x_i,\lambda_{\hat{X},Z^{\star},d_s - \delta_s(t, k)}^{\star}\big) 
	- k \lambda_{\hat{X},Z^{\star},d_s - \delta_s(t, k)}^{\star} \delta_s(t, k) - \lambda_{\hat{X},Z^{\star},d_s - \delta_s(t, k)}^{\star} \tau \Big) \nonumber \\  
	&\cdot \mathbb{P}\Big[kd_s - k\delta_s(t, k) - \tau 
	< \sum_{i=1}^k\bar{\textsf{d}}_s(x_i, Z_i^{\star}) 
	\leq kd_s - k\delta_s(t, k) | \hat{X}^k = x^k \Big], \label{Lemma4_03}
	\end{align}
	\hrulefill
\end{figure*}
where $P_{Z^{\star k}|\hat{X}^k} = P^{\otimes k}_{Z^{\star}|\hat{X}}$, \eqref{Lemma4_02} is by \eqref{Lemma4_01}, the settings, and the fact that $\Lambda_{Z^{\star}, d}(x^k,k \lambda) = \sum_{i=1}^k \Lambda_{Z^{\star}, d}(x_i,\lambda)$ for any $\lambda >0$, and \eqref{Lemma4_03} is by the definition \eqref{Lambda_Y_d}.

We now give the following two lemmas to tackle the right-hand side of \eqref{Lemma4_03}. Note that these lemmas are obtained by slightly modifying the proofs of \cite[Lemmas 4, 5]{Kostina2012Fixed}.
\begin{lemma} 
	\label{Counterpart_01}
	Fix $d_{s, \min} < d_s < d_{s,\max}$. There exist $\kappa_0$, $k_7 > 0$ such that for all $\kappa \leq \kappa_0$, $k \geq k_7$, $\frac{2\zeta}{\sqrt{k}} \leq t \leq 1$, there exist a set $F_k \subseteq \mathcal{X}^k$ and constants $\tau$, $C_1$, $K_1 > 0$ such that
	\begin{equation}
	\mathbb{P}[X^k \notin F_k] \leq \frac{K_1}{\sqrt{k}}
	\end{equation}
	and for all $x^k \in F_k$
	\begin{align}
	\mathbb{P}\Big[&kd_s - k\delta_s(t, k) - \tau 
	< \sum_{i=1}^k\bar{\textsf{d}}_s(x_i, Z_i^{\star}) \nonumber \\   
	&\leq kd_s - k\delta_s(t, k) | \hat{X}^k = x^k \Big] \geq \frac{C_1}{\sqrt{k}},
	\end{align}
	\begin{align}
	|\lambda_{\hat{X},Z^{\star},d_s - \delta_s(t, k)}^{\star} - \lambda_s^{\star}| < \kappa,
	\end{align}
	where $\lambda_{\hat{X},Z^{\star},d_s - \delta_s(t, k)}^{\star}$ depends on $x^k$ through \eqref{P_hat_X}.
\end{lemma}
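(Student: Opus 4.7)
The plan is to adapt the proof of \cite[Lemma 4]{Kostina2012Fixed} to the indirect setting. Two new features must be handled: (i) the surrogate distortion $\bar{\textsf{d}}_s$ replaces the direct distortion, with reference output distribution $P_{Z^\star}$ inherited from the original problem at level $d_s$; and (ii) the distortion level is perturbed to $d_s - \delta_s(t,k)$, where $|\delta_s(t,k)| = O(\sqrt{\log k/k}) = o(1)$ uniformly in $t \in [2\zeta/\sqrt{k}, 1]$ by \eqref{Q_inv_bound}. Throughout, the boundedness of $\{\varphi(x): x \in \mathcal{X}\}$ and $\hat{\mathcal{S}}$ supplies the uniform moment bounds that drive the analysis.

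I would define $F_k$ as the intersection of events $\{|\tfrac{1}{k}\sum_{i=1}^k h_j(x_i) - \mathbb{E}[h_j(X)]| \leq \eta\}$ for a finite collection of bounded functions $h_j$ on $\mathcal{X}$ and a small constant $\eta > 0$ depending only on $\kappa$. The $h_j$ are chosen so that the empirical averages control (a) the multiplier $\lambda^\star_{\hat{X},Z^\star,d}$ for $d$ in a neighborhood of $d_s$, through an implicit function argument on the KKT condition of \eqref{generalized_rate_distortion_function}, and (b) the variance $V_k$ entering the Berry--Ess\'een bound below. Chebyshev's inequality then yields $\mathbb{P}[X^k \notin F_k] = O(1/k) \leq K_1/\sqrt{k}$. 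The third conclusion $|\lambda^\star_{\hat{X},Z^\star,d_s-\delta_s(t,k)} - \lambda_s^\star| < \kappa$ follows from the implicit function theorem applied at the non-degenerate KKT point for $(P_X, d_s)$: on $F_k$ the expectation under $P_{\hat X}$ differs from that under $P_X$ by at most $\eta$, and $|\delta_s(t,k)| = O(\sqrt{\log k/k})$, so both perturbations are made arbitrarily small by taking $\eta$ small and $k$ large.

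For the anti-concentration estimate, conditionally on $\hat X^k = x^k$ the sum $S_k \triangleq \sum_{i=1}^k \bar{\textsf{d}}_s(x_i, Z_i^\star)$ is a sum of $k$ independent random variables with uniformly bounded variance and third absolute moment. By complementary slackness for $R_{\hat X, Z^\star}(d_s - \delta_s(t,k))$, which is active because the previous step keeps $\lambda^\star > 0$ on $F_k$ (choosing $\kappa_0 < \lambda_s^\star$), the conditional mean $\mu_k \triangleq \mathbb{E}[S_k \mid \hat X^k = x^k]$ equals $k(d_s - \delta_s(t,k))$ exactly. Hence the target interval is precisely $[\mu_k - \tau, \mu_k]$, and two applications of the Berry--Ess\'een theorem give
\begin{align*}
\mathbb{P}\bigl[\mu_k - \tau < S_k \leq \mu_k \mid \hat X^k = x^k\bigr] \geq Q\!\left(-\frac{\tau}{\sigma_k}\right) - Q(0) - \frac{2 c_0 T_k}{V_k^{3/2}\sqrt{k}},
\end{align*}
where $\sigma_k \triangleq \sqrt{kV_k}$. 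The Gaussian gap equals $\phi(0)\,\tau/\sigma_k\,(1+o(1)) = \Theta(\tau/\sqrt{k})$, while the Berry--Ess\'een error is $O(1/\sqrt{k})$; picking $\tau$ as a sufficiently large absolute constant (using $d_s < d_{s,\max}$ and the control of $V_k$ on $F_k$ to keep $V_k \geq v_0 > 0$) yields the desired $C_1/\sqrt{k}$ lower bound.

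The main obstacle is ensuring that all constants remain uniform across $t \in [2\zeta/\sqrt{k}, 1]$ despite $\delta_s(t,k)$ reaching magnitudes of order $\sqrt{\log k/k}$. This uniformity rests entirely on the interiority condition $d_{s,\min} < d_s < d_{s,\max}$, which keeps $\lambda_s^\star$ strictly positive, keeps the KKT Jacobian non-degenerate in the implicit function step, and keeps $V_k$ bounded both above and below in the Berry--Ess\'een step; since $|\delta_s(t,k)| = o(1)$ uniformly in $t$, these properties persist under the perturbation for every admissible $t$ once $k$ exceeds a threshold $k_7$ that can be chosen independently of $t$.
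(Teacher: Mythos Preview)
Your proposal is correct and follows essentially the same route as the paper, which also reduces the claim to the proof of \cite[Lemma 4]{Kostina2012Fixed} after observing that $|\delta_s(t,k)| = O(\sqrt{\log k/k})$ uniformly in $t \in [2\zeta/\sqrt{k},1]$. The only minor difference is that the paper argues continuity of $\lambda^\star_{\hat{X},Z^\star,d}$ in $d$ by invoking the continuous differentiability of $R_{\hat{X},Z^\star}(\cdot)$ at $d_s$ (citing \cite{En1999On}) rather than an implicit-function argument on the KKT system, which sidesteps having to verify non-degeneracy of the Jacobian.
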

\begin{proof}[Proof]
	By \eqref{Q_inv_bound}, we have for all $\frac{2\zeta}{\sqrt{k}} \leq t \leq 1$,
	\begin{equation}
	\label{order_of_delta_s_new}
	|\delta_s(t, k)| \leq \sqrt{\frac{V_1(\ln k - 2 \ln \zeta)}{k}}.
	\end{equation} 
	Juxtaposing \eqref{order_of_delta_s_new} with the fact that $R_{\hat{X},Z^{\star}}(d)$ is continuously differentiable at point $d=d_s$ \cite{En1999On}, we have $\lambda_{\hat{X},Z^{\star},d_s - \delta_s(t, k)}^{\star}$ can arbitrarily close to $\lambda_{\hat{X},Z^{\star},d_s}^{\star}$ as $k \to \infty$. Therefore, $\lambda_{\hat{X},Z^{\star},d_s - \delta_s(t, k)}^{\star}$ satisfies \cite[(295)]{Kostina2012Fixed} for all sufficiently large $k$ if $\lambda_{\hat{X},Z^{\star},d_s}^{\star}$ satisfies \cite[(295)]{Kostina2012Fixed}. Keeping this in mind and proceeding in the same manner with the proof of \cite[Lemma 4]{Kostina2012Fixed}, Lemma \ref{Counterpart_01} follows.
\end{proof}

\begin{lemma} 
	\label{Counterpart_02}
	Fix $d_{s, \min} < d_s < d_{s,\max}$. There exist constants $k_8$, $K_2$, $C_2 > 0$ such that for all $k \geq k_8$ and $\frac{2\zeta}{\sqrt{k}} \leq t \leq 1$, we have
	\begin{align}
	&\mathbb{P}\Bigg[\sum_{i=1}^k \! \Big(\Lambda_{Z^{\star}, d_s }\big(X_i,\lambda_{\hat{X},Z^{\star},d_s - \delta_s(t, k)}^{\star}\big) + \lambda_{\hat{X},Z^{\star},d_s - \delta_s(t, k)}^{\star} r_k \Big)\nonumber \\ 
	&\ \ \ \leq \sum_{i=1}^k \Big(\Lambda_{Z^{\star}, d_s }\big(X_i,\lambda_{s}^{\star}\big) + \lambda_{s}^{\star} r_k \Big) + C_2\log k \Bigg] > 1 - \frac{K_2}{\sqrt{k}}, \label{Lemma_Lambda}
	\end{align}
	where $r_k$ satisfies $r_k = O\big(\sqrt{\log k/k}\big)$.
\end{lemma}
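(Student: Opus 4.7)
\medskip

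\noindent\textbf{Proof plan for Lemma \ref{Counterpart_02}.} The strategy is a Taylor expansion of $\Lambda_{Z^\star,d_s}(x,\lambda)$ around the deterministic point $\lambda=\lambda_s^\star$, together with a high-probability bound on the Lagrange multiplier perturbation $\Delta_k \triangleq \lambda_{\hat{X},Z^\star,d_s-\delta_s(t,k)}^\star - \lambda_s^\star$. Writing $\lambda'=\lambda_{\hat{X},Z^\star,d_s-\delta_s(t,k)}^\star$, I would first establish that the function $\lambda\mapsto \Lambda_{Z^\star,d_s}(x,\lambda)$ is smooth uniformly in $x$ on a compact neighbourhood of $\lambda_s^\star$; this follows from the boundedness of $\{\varphi(x):x\in\mathcal{X}\}$ and $\hat{\mathcal{S}}$, which keeps $\bar{\textsf{d}}_s$ bounded and makes the moment generating function in \eqref{Lambda_Y_d} analytic in $\lambda$ with all partial derivatives uniformly bounded. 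Hence
\begin{equation*}
\Lambda_{Z^\star,d_s}(x,\lambda') = \Lambda_{Z^\star,d_s}(x,\lambda_s^\star) + \Delta_k\,\partial_\lambda\Lambda_{Z^\star,d_s}(x,\lambda_s^\star) + O(\Delta_k^2),
\end{equation*}
with the $O(\cdot)$ term uniform over $x$ and $\lambda'$ in the neighbourhood.

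Next I would use the optimality of $\lambda_s^\star$ to show that the linear coefficient is centred: summing over $i$,
\begin{equation*}
\sum_{i=1}^{k}\Lambda_{Z^\star,d_s}(X_i,\lambda') - \sum_{i=1}^{k}\Lambda_{Z^\star,d_s}(X_i,\lambda_s^\star) = \Delta_k\sum_{i=1}^{k}\partial_\lambda\Lambda_{Z^\star,d_s}(X_i,\lambda_s^\star) + k\,O(\Delta_k^2).
\end{equation*}
The KKT optimality for the rate-distortion problem implies $\mathbb{E}[\partial_\lambda\Lambda_{Z^\star,d_s}(X,\lambda_s^\star)] = 0$ (at the optimum, the derivative of the dual Lagrangian vanishes, which is exactly the active distortion constraint $\mathbb{E}[\bar{\textsf{d}}_s(X,Z^\star)]=d_s$). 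Then $\sum_{i=1}^{k}\partial_\lambda\Lambda_{Z^\star,d_s}(X_i,\lambda_s^\star)$ is a zero-mean sum of i.i.d.\ bounded random variables, so by the Berry--Ess\'een theorem
\begin{equation*}
\mathbb{P}\!\left[\Big|\sum_{i=1}^{k}\partial_\lambda\Lambda_{Z^\star,d_s}(X_i,\lambda_s^\star)\Big| \leq \alpha\sqrt{k\log k}\right] \geq 1 - \tfrac{K'}{\sqrt{k}}
\end{equation*}
for suitable constants $\alpha,K'>0$.

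The main obstacle is the high-probability control of $|\Delta_k|$. I would decompose $\Delta_k = \Delta_k^{(1)}+\Delta_k^{(2)}$ where $\Delta_k^{(1)} = \lambda_{\hat{X},Z^\star,d_s-\delta_s(t,k)}^\star - \lambda_{\hat{X},Z^\star,d_s}^\star$ and $\Delta_k^{(2)}=\lambda_{\hat{X},Z^\star,d_s}^\star - \lambda_s^\star$. The first piece is deterministic of order $|\delta_s(t,k)| = O(\sqrt{\log k/k})$ by continuous differentiability of $R_{\hat{X},Z^\star}(\cdot)$ at $d_s$, uniformly over empirical laws close to $P_X$ (this uniformity follows from the implicit-function characterization of $\lambda^\star$ as a smooth function of the source law and distortion level, again relying on boundedness). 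The second piece $\Delta_k^{(2)}$ measures how sensitive $\lambda^\star$ is to replacing $P_X$ by its empirical version $P_{\hat{X}}$; again using the implicit function theorem applied to the KKT equation $\mathbb{E}_{\hat{X}}[\partial_\lambda\Lambda_{Z^\star,d_s}(\hat{X},\lambda)]=0$, $|\Delta_k^{(2)}|$ is controlled by the empirical deviation of a bounded functional of $X$ from its mean, which by Berry--Ess\'een is $O(1/\sqrt{k})$ with probability at least $1-K''/\sqrt{k}$, and certainly $O(\sqrt{\log k/k})$ with that probability. Combining, $|\Delta_k|=O(\sqrt{\log k/k})$ with probability at least $1-K_2/\sqrt{k}$ for some $K_2$.

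Finally I would assemble the pieces. On the good event above,
\begin{equation*}
\Big|\Delta_k\sum_{i=1}^{k}\partial_\lambda\Lambda_{Z^\star,d_s}(X_i,\lambda_s^\star)\Big| = O(\sqrt{\log k/k})\cdot O(\sqrt{k\log k}) = O(\log k),
\end{equation*}
\begin{equation*}
k\,O(\Delta_k^2) = O(\log k),\qquad k\Delta_k\,r_k = O(k)\cdot O(\sqrt{\log k/k})\cdot O(\sqrt{\log k/k}) = O(\log k),
\end{equation*}
using $r_k = O(\sqrt{\log k/k})$ for the last term. Choosing $C_2$ to absorb all these $O(\log k)$ contributions yields the inequality in \eqref{Lemma_Lambda} on the intersection of the two good events, which has probability at least $1-K_2/\sqrt{k}$ after redefining $K_2$ to include $K'$. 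The single hardest step is getting uniformity of the implicit-function / differentiability argument for $\lambda^\star(P,d)$ as $P$ ranges over empirical distributions, but this is precisely where the bounded-alphabet structure of $\{\varphi(x):x\in\mathcal{X}\}$ and $\hat{\mathcal{S}}$ enters to guarantee the requisite uniform smoothness.
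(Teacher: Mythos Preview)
Your proposal is correct and follows essentially the same route as the paper, which simply defers to the proof of \cite[Lemma~5]{Kostina2012Fixed} (invoking Lemma~\ref{Counterpart_01} for the closeness of the Lagrange multipliers) and remarks that the extra $r_k$ term is absorbed by a slight adaptation. Your Taylor-expansion argument around $\lambda_s^\star$, the zero-mean property $\mathbb{E}[\partial_\lambda\Lambda_{Z^\star,d_s}(X,\lambda_s^\star)]=0$ from the active distortion constraint, the Berry--Ess\'een control of the centred derivative sum, and the $O(\sqrt{\log k/k})$ high-probability bound on $\Delta_k$ are precisely the ingredients of that reference; your decomposition $\Delta_k=\Delta_k^{(1)}+\Delta_k^{(2)}$ and your treatment of the $k\Delta_k r_k$ cross-term just spell out explicitly what the paper leaves implicit.
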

\begin{proof}[Proof]
	Given Lemma \ref{Counterpart_01}, following the proof of \cite[Lemma 5]{Kostina2012Fixed}, \eqref{Lemma_Lambda} with $r_k = 0$ is readily obtained. With slight adaptations to the proof of \cite[Lemma 5]{Kostina2012Fixed}, it can further be shown that \eqref{Lemma_Lambda} holds for $r_k = O\big(\sqrt{\log k/k}\big)$ as well.
\end{proof}

Recall that $\delta_s(t, k) = O\big(\sqrt{\log k/k}\big)$ for $\frac{2\zeta}{\sqrt{k}} \leq t \leq 1$, and thus it satisfies the restriction on $r_k$ in Lemma \ref{Counterpart_02}. Note that $\Lambda_{Z^{\star}, d_s }\big(x,\lambda_{s}^{\star}\big) = \jmath_{X}(x,d_s)$. Applying Lemmas \ref{Counterpart_01} and \ref{Counterpart_02} to relax \eqref{Lemma4_03}, we finally arrive at Lemma \ref{Lemma_g_d_tilted_info}.

\section{Proof of Theorem \ref{theorem_Gaussian_approximation_achi_two}} \label{proof_theorem_Gaussian_approximation_achi_two}

This proof is similar to that of Theorem \ref{theorem_Gaussian_approximation_achi} and we only give a proof sketch. Unless otherwise specified, the notations here are reused from Section \ref{proof_theorem_Gaussian_approximation_achi} with the same meanings. The proof is an asymptotic analysis of Lemma \ref{Single_Shot_Achi_two_c} in Appendix \ref{Auxiliary_Results}. By Lemma \ref{Single_Shot_Achi_two_c} and the arguments leading to \eqref{achi_01} and \eqref{achi_02}, for any $\gamma > 0$, there exists a $(k, M, d_s, d_x, \epsilon')$ code such that
\begin{align}
\epsilon^{\prime} \leq& e^{-\frac{M}{\gamma}} + \int_{\frac{2 \zeta}{\sqrt{k}}}^{1} \mathbb{P}\Big[\gamma \mathbb{P}\left[\pi(X^k, Z^{\star k}, Y^{\star k}) \leq t | X^k \right] < 1\Big] \mathrm{d} t \nonumber \\ 
&+\frac{2 \zeta}{\sqrt{k}}, \label{achi_two_02}
\end{align}
where $P_{X^k Z^{\star k} Y^{\star k}} = P_{X^k} P_{Z^{\star k} Y^{\star k}}$, $P_{X^k} = P_X \times \dots \times P_X$, $P_{Z^{\star k}Y^{\star k}} = P_{Z^{\star}Y^{\star}} \times \dots \times P_{Z^{\star}Y^{\star}}$ with $P_{Z^{\star}Y^{\star}}$ achieving  $R_{S,X}(d_s, d_x)$. Similarly, let
\begin{align}
\log M =& \log \gamma + \log \ln \sqrt{k}, \label{log_M_two} \\
\log \gamma =& kR_{S,X}(d_s,d_x) + \sqrt{k \tilde{\mathcal{V}}(d_s,d_x)} Q^{-1}(\epsilon) \nonumber \\  
&+ O(\log k). \label{log_gamma_two}
\end{align}
Observe that $\log M$ takes the desired form, i.e., the right-hand side of \eqref{second_order_general_two}. In the following, we aim to show that the right-hand side of \eqref{achi_two_02}, with a properly chosen remainder term (i.e., the term $O(\log k)$ in \eqref{log_gamma_two}), can be upper bounded by $\epsilon$. This implies that the choice of $M$ in \eqref{log_M_two} is compatible with parameters $k$, $d_s$, $d_x$, and $\epsilon$, thereby proving the existence of a $(k, M, d_s, d_x, \epsilon)$ code.


For $(x^k,z^k, y^k)$ and $t$ such that
\begin{align}
&\textsf{d}_x(x^k, y^k) \leq d_x, \label{d_x_condition} \\
&\mu_k(x^k,z^k) \leq d_s - \sqrt{\frac{V_k(x^k,z^k)}{k}}Q^{-1}\Big(t - \frac{\zeta}{\sqrt{k}}\Big), \label{core_condition_two_02} \\
& \frac{\zeta}{\sqrt{k}} < t < 1, \label{core_condition_two_02_01}
\end{align}
we have
\begin{align}
\pi(x^k,z^k, y^k) = \pi(x^k,z^k) \leq  t, \label{By_berry_esseen_two}
\end{align}
where the equality holds due to condition \eqref{d_x_condition}, while the inequality is derived from the reasoning in \eqref{By_Condition_1}-\eqref{By_berry_esseen}. Note that for all $d_s$ such that $(d_s, d_x) \in \textsf{int}(\mathcal{D}_{\mathrm{adm}})$, we have $d_s > d_{s, \min}$. Consequently, we can still tackle \eqref{core_condition_two_02} using Lemma \ref{tackle_V_lemma}. Therefore, defining
\begin{align}
&g_{Z^{\star k} Y^{\star k}}(x^k, t) \triangleq \mathbb{P}\bigg[\textsf{d}_x(x^k, Y_i^{\star k}) \leq d_x \nonumber \\ 
&\cap \frac{1}{k}\sum_{i=1}^k(\varphi(x_i) - Z^{\star}_i)^2 + \sigma_w^2 \leq d_s - \delta_s(t, k) \bigg],
\end{align}
we have by \eqref{d_x_condition}-\eqref{By_berry_esseen_two} and Lemma \ref{tackle_V_lemma},
\begin{align}
\label{core_01_two}
\mathbb{P}[\pi(x^k, Z^{\star k}, Y^{\star k}) \leq t] 
\geq g_{Z^{\star k} Y^{\star k}}(x^k, t)
\end{align}
for all $k > k_0$ and $\frac{2\zeta}{\sqrt{k}} \leq t \leq 1$.

The counterpart of Lemma \ref{Lemma_g_d_tilted_info} can be stated as follows. The proof of this lemma is obtained by extending the proof of Lemma \ref{Lemma_g_d_tilted_info} using \cite[Lemmas 3, 5]{Yang2024Indirect} and is omitted here.
\begin{lemma} 
	\label{Lemma_g_d_tilted_info_two}
	There exist constants $C$, $T$, $k_6 > 0$ such that for all $k > k_6$, $(d_s, d_x) \in \textsf{int}(\mathcal{D}_{nd})$, and $\frac{2\zeta}{\sqrt{k}} \leq t \leq 1$,
	\begin{align}
	\label{Fixed_result_two}
	\mathbb{P}\bigg[&\log\frac{1}{g_{Z^{\star k} Y^{\star k}}(X^k, t)} \leq \sum_{i=1}^k \jmath_{X}(X_i,d_s, d_x) + k \lambda_s^{\star}\delta_s(t, k) \nonumber \\  
	&+ C \log k \bigg] 
	\geq 1 - \frac{T}{\sqrt{k}},
	\end{align}
	where $\lambda_s^{\star}$ is that defined in \eqref{lambda_s_two}.
\end{lemma}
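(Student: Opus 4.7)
The plan is to imitate the proof of Lemma~\ref{Lemma_g_d_tilted_info} line by line, replacing each one-constraint ingredient by its two-constraint counterpart from \cite{Yang2024Indirect}. First I would introduce the two-constraint generalized rate-distortion function
\begin{align*}
R_{\hat{X},Z Y}(d_s', d_x) \triangleq \min_{P_{ZY|\hat{X}}} D(P_{ZY|\hat{X}}\|P_{ZY}|P_{\hat{X}})
\end{align*}
subject to $\mathbb{E}[\bar{\textsf{d}}_s(\hat{X},Z)] \leq d_s'$ and $\mathbb{E}[\textsf{d}_x(\hat{X},Y)] \leq d_x$, with Lagrange multipliers $\lambda_{\hat{X},ZY,d_s',d_x}^{\star s}$ and $\lambda_{\hat{X},ZY,d_s',d_x}^{\star x}$, and the corresponding two-variable tilted exponent
\begin{align*}
\Lambda(x,\lambda_s,\lambda_x) \triangleq \log\frac{1}{\mathbb{E}[\exp(\lambda_s d_s + \lambda_x d_x - \lambda_s\bar{\textsf{d}}_s(x,Z) - \lambda_x\textsf{d}_x(x,Y))]},
\end{align*}
which by the optimality identity for $P_{Z^\star Y^\star|X}$ coincides with $\jmath_{X}(x,d_s,d_x)$ at $(\lambda_s^\star,\lambda_x^\star)$ (cf.\ \eqref{d_tilted_information_of_surrogate}).

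Next I would invoke the two-constraint analogue of \cite[Lemma~1]{Kostina2012Fixed}, which for any slack parameters $\nu_s,\nu_x>0$ gives
\begin{align*}
&\mathbb{P}[\bar{\textsf{d}}_s(x,Z) \leq d_s - \delta_s(t,k) \cap \textsf{d}_x(x,Y) \leq d_x] \\
&\geq \exp\bigl(-\Lambda(x,\lambda_s^{\sharp},\lambda_x^{\sharp}) - \lambda_s^{\sharp}\nu_s - \lambda_x^{\sharp}\nu_x\bigr) \\
&\quad\cdot \mathbb{P}[\text{two-dim.\ window of width }\nu_s\times\nu_x\text{ near the constraints}],
\end{align*}
with $(\lambda_s^{\sharp},\lambda_x^{\sharp})=(\lambda_{\hat{X},Z^\star Y^\star,d_s-\delta_s(t,k),d_x}^{\star s},\lambda_{\hat{X},Z^\star Y^\star,d_s-\delta_s(t,k),d_x}^{\star x})$. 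Taking $\nu_s=\tau_s/k$, $\nu_x=\tau_x/k$, $P_{Y}=P_{Z^{\star k}Y^{\star k}}$, and $P_{\hat{X}}=P_{\hat{X}}^{\otimes k}$ with empirical $P_{\hat{X}}$ defined as in \eqref{P_hat_X}, the additivity $\Lambda(x^k,k\lambda_s,k\lambda_x)=\sum_i\Lambda(x_i,\lambda_s,\lambda_x)$ together with the identity
$\Lambda(x_i,\lambda_s,\lambda_x;d_s-\delta_s(t,k),d_x)=\Lambda(x_i,\lambda_s,\lambda_x;d_s,d_x)-\lambda_s\delta_s(t,k)$
isolates an extra $k\lambda_s^{\sharp}\delta_s(t,k)$ term on the right-hand side, matching the $k\lambda_s^\star\delta_s(t,k)$ in \eqref{Fixed_result_two}.

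To conclude, I would deploy two auxiliary statements playing the roles of Lemmas~\ref{Counterpart_01} and \ref{Counterpart_02}. The first, obtained from \cite[Lemma~3]{Yang2024Indirect}, asserts that on a high-probability set $F_k\subseteq\mathcal{X}^k$ (with $\mathbb{P}[X^k\notin F_k]=O(1/\sqrt{k})$) the two-dimensional window probability is at least $C_1/k$ and the empirical multipliers $(\lambda_s^{\sharp},\lambda_x^{\sharp})$ lie within an arbitrarily small neighbourhood of $(\lambda_s^\star,\lambda_x^\star)$; this uses continuity of the multipliers on $\textsf{int}(\mathcal{D}_{nd})$, together with $\delta_s(t,k)=O(\sqrt{\log k/k})$ from \eqref{order_of_delta_s_new}. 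The second, from \cite[Lemma~5]{Yang2024Indirect} adapted with an $r_k=O(\sqrt{\log k/k})$ perturbation exactly as in Lemma~\ref{Counterpart_02}, shows that
\begin{align*}
\sum_{i=1}^k\Lambda(X_i,\lambda_s^{\sharp},\lambda_x^{\sharp}) + k\lambda_s^{\sharp}\delta_s(t,k)
\leq \sum_{i=1}^k\jmath_X(X_i,d_s,d_x) + k\lambda_s^\star\delta_s(t,k) + C\log k
\end{align*}
holds with probability at least $1-K_2/\sqrt{k}$. Combining these bounds and absorbing $-\log(C_1/k)=O(\log k)$, $\lambda_s^{\sharp}\tau_s/k=O(1/k)$, and $\lambda_x^{\sharp}\tau_x/k=O(1/k)$ into $C\log k$ yields \eqref{Fixed_result_two}.

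The main obstacle is the two-dimensional local limit step inside the analogue of Lemma~\ref{Counterpart_01}: one must show that for $x^k$ in a typical set the joint empirical averages $(\bar{\textsf{d}}_s(x^k,Z^{\star k}),\textsf{d}_x(x^k,Y^{\star k}))$ hit a rectangle of area $O(1/k^2)$ anchored at $(d_s-\delta_s(t,k),d_x)$ with probability $\Omega(1/k)$, despite the $d_s$-coordinate being perturbed while the $d_x$-coordinate is not. Relative to the one-dimensional argument this requires a joint (rather than marginal) Berry--Esséen/local CLT together with non-degeneracy of the covariance of the per-letter distortion pair, which is precisely what the assumption $(d_s,d_x)\in\textsf{int}(\mathcal{D}_{nd})$ and the continuous differentiability of $R_{S,X}(\cdot,\cdot)$ at that point guarantee through \cite[Lemma~3]{Yang2024Indirect}; once that machinery is in place, the rest is notational bookkeeping.
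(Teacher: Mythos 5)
Your proposal takes essentially the same route as the paper, whose own proof of Lemma \ref{Lemma_g_d_tilted_info_two} is just the remark that one extends the proof of Lemma \ref{Lemma_g_d_tilted_info} using \cite[Lemmas 3, 5]{Yang2024Indirect}; your two-constraint tilted exponent, the $\nu_s,\nu_x$ window bound, the empirical-distribution substitution, and the two counterpart lemmas (with the window probability now $\Omega(1/k)$ and its logarithm absorbed into $C\log k$) are precisely that extension. One minor slip: the shift identity should read $\Lambda(x_i,\lambda_s,\lambda_x; d_s-\delta_s(t,k),d_x)=\Lambda(x_i,\lambda_s,\lambda_x; d_s,d_x)+\lambda_s\,\delta_s(t,k)$ (plus, not minus), which is in fact what your subsequent bookkeeping producing the $+\,k\lambda_s^{\star}\delta_s(t,k)$ term on the right-hand side of \eqref{Fixed_result_two} uses.
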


Further, by the definition \eqref{Definition_of_Dnd} of $\mathcal{D}_{nd}$, we have $\mathbb{E}\left[\bar{\textsf{d}}_s(X,Z^\star)\right] = d_s$, which implies that $\mathbb{E}\left[\textrm{Var}\left[\textsf{d}_s(S,Z^\star)|X,Z^\star\right]\right] = V_1$ still holds for $(d_s,d_x) \in \mathcal{D}_{nd}$. Up to this point, all conditions for relaxing the middle term of the right-hand side of \eqref{achi_two_02} have been satisfied. Similar to \eqref{pro_01} and \eqref{by_Fixed_result}-\eqref{final_04_new}, we have
\begin{align}
\int_{\frac{2 \zeta}{\sqrt{k}}}^{1} \mathbb{P}\Big[\gamma \mathbb{P}\left[\pi(X^k, Z^{\star k}, Y^{\star k}) \leq t | X^k \right] < 1\Big] \mathrm{d} t
\leq \epsilon - \frac{2 \zeta + 1}{\sqrt{k}}. \label{final_04_new_two}
\end{align}
Juxtaposing \eqref{achi_two_02}, \eqref{log_M_two}, \eqref{log_gamma_two}, and \eqref{final_04_new_two}, we finally obtain $\epsilon^{\prime} \leq \epsilon $, which completes the proof.

\bibliographystyle{IEEEtran}
\bibliography{ref}

\begin{thebibliography}{10}
\providecommand{\url}[1]{#1}
\csname url@samestyle\endcsname
\providecommand{\newblock}{\relax}
\providecommand{\bibinfo}[2]{#2}
\providecommand{\BIBentrySTDinterwordspacing}{\spaceskip=0pt\relax}
\providecommand{\BIBentryALTinterwordstretchfactor}{4}
\providecommand{\BIBentryALTinterwordspacing}{\spaceskip=\fontdimen2\font plus
\BIBentryALTinterwordstretchfactor\fontdimen3\font minus
  \fontdimen4\font\relax}
\providecommand{\BIBforeignlanguage}[2]{{%
\expandafter\ifx\csname l@#1\endcsname\relax
\typeout{** WARNING: IEEEtran.bst: No hyphenation pattern has been}%
\typeout{** loaded for the language `#1'. Using the pattern for}%
\typeout{** the default language instead.}%
\else
\language=\csname l@#1\endcsname
\fi
#2}}
\providecommand{\BIBdecl}{\relax}
\BIBdecl

\bibitem{Dobrushin1962Information}
R.~Dobrushin and B.~Tsybakov, ``Information transmission with additional
  noise,'' \emph{IRE Trans. Inf. Theory}, vol.~8, no.~5, pp. 293--304, Sep.
  1962.

\bibitem{Wolf1970Transmission}
J.~Wolf and J.~Ziv, ``Transmission of noisy information to a noisy receiver
  with minimum distortion,'' \emph{IEEE Trans. Inf. Theory}, vol.~16, no.~4,
  pp. 406--411, Jul. 1970.

\bibitem{Witsenhausen1980Indirect}
H.~Witsenhausen, ``Indirect rate distortion problems,'' \emph{IEEE Trans. Inf.
  Theory}, vol.~26, no.~5, pp. 518--521, Sep. 1980.

\bibitem{Kostina2016Nonasymptotic}
V.~Kostina and S.~Verd\'u, ``Nonasymptotic noisy lossy source coding,''
  \emph{IEEE Trans. Inf. Theory}, vol.~62, no.~11, pp. 6111--6123, Nov. 2016.

\bibitem{Liu2021Rate}
J.~Liu, W.~Zhang, and H.~V. Poor, ``A rate-distortion framework for
  characterizing semantic information,'' in \emph{Proc. IEEE Int. Symp. Inf.
  Theory (ISIT)}, Jul. 2021, pp. 2894--2899.

\bibitem{Yang2024Indirect}
\BIBentryALTinterwordspacing
H.~Yang, Y.~Shi, S.~Shao, and X.~Yuan, ``Indirect lossy source coding with
  observed source reconstruction: Nonasymptotic bounds and second-order
  asymptotics,'' {A}ccessed: Sep. 05, 2024. [Online]. Available:
  \url{https://arxiv.org/abs/2401.14962}
\BIBentrySTDinterwordspacing

\bibitem{Kostina2012Fixed}
V.~Kostina and S.~Verd\'u, ``Fixed-length lossy compression in the finite
  blocklength regime,'' \emph{IEEE Trans. Inf. Theory}, vol.~58, no.~6, pp.
  3309--3338, Jun. 2012.

\bibitem{Ephraim1988A}
Y.~Ephraim and R.~Gray, ``A unified approach for encoding clean and noisy
  sources by means of waveform and autoregressive model vector quantization,''
  \emph{IEEE Trans. Inf. Theory}, vol.~34, no.~4, pp. 826--834, Jul. 1988.

\bibitem{Ayanoglu1990On}
E.~Ayanoglu, ``On optimal quantization of noisy sources,'' \emph{IEEE Trans.
  Inf. Theory}, vol.~36, no.~6, pp. 1450--1452, Nov. 1990.

\bibitem{Fischer1990Estimation}
T.~Fischer, J.~Gibson, and B.~Koo, ``Estimation and noisy source coding,''
  \emph{IEEE Trans. Acoust., Speech Signal Process.}, vol.~38, no.~1, pp.
  23--34, Jan. 1990.

\bibitem{Linder1997Empirical}
T.~Linder, G.~Lugosi, and K.~Zeger, ``Empirical quantizer design in the
  presence of source noise or channel noise,'' \emph{IEEE Trans. Inf. Theory},
  vol.~43, no.~2, pp. 612--623, Mar. 1997.

\bibitem{Sakrison1968Source}
D.~Sakrison, ``Source encoding in the presence of random disturbance,''
  \emph{IEEE Trans. Inf. Theory}, vol.~14, no.~1, pp. 165--167, Jan. 1968.

\bibitem{Tishby1999The}
N.~Tishby, F.~C. Pereira, and W.~Bialek, ``The information bottleneck method,''
  in \emph{Proc. Allerton Conf. Commun. Control Comput.}, Sep. 1999, p.
  368–377.

\bibitem{Chechik2005Information}
G.~Chechik, A.~Globerson, N.~Tishby, and Y.~Weiss, ``Information bottleneck for
  gaussian variables,'' in \emph{J. Mach. Learn. Res.}, vol.~6, Jan. 2005, p.
  165–188.

\bibitem{Alemi2016Deep}
A.~A. Alemi, I.~Fischer, J.~V. Dillon, and K.~Murphy, ``Deep variational
  information bottleneck,'' in \emph{Proc. Int. Conf. Learn. Represent.
  (ICLR)}, Apr. 2017, p. 368–377.

\bibitem{Saxe2018On}
A.~M. Saxe, Y.~Bansal, J.~Dapello, M.~Advani, A.~Kolchinsky, B.~D. Tracey, and
  D.~D. Cox, ``On the information bottleneck theory of deep learning,'' in
  \emph{Proc. Int. Conf. Learn. Represent. (ICLR)}, 2018, p. 1–27.

\bibitem{Goldfeld2020The}
Z.~Goldfeld and Y.~Polyanskiy, ``The information bottleneck problem and its
  applications in machine learning,'' \emph{IEEE J. Sel. Areas Inf. Theory},
  vol.~1, no.~1, pp. 19--38, May 2020.

\bibitem{Liu2022Indirect}
J.~Liu, S.~Shao, W.~Zhang, and H.~V. Poor, ``An indirect rate-distortion
  characterization for semantic sources: General model and the case of
  {G}aussian observation,'' \emph{IEEE Trans. Commun.}, vol.~70, no.~9, pp.
  5946--5959, Sep. 2022.

\bibitem{Yuxuan2023Rate}
\BIBentryALTinterwordspacing
Y.~Shi, S.~Shao, and Y.~Wu, ``Rate-distortion analysis for semantic-aware
  multi-terminal source coding problem,'' {A}ccessed: Dec. 20, 2023. [Online].
  Available: \url{https://arxiv.org/abs/2303.06391}
\BIBentrySTDinterwordspacing

\bibitem{Shi2023Excess}
Y.~Shi, S.~Shao, Y.~Wu, W.~Zhang, X.-G. Xia, and C.~Xiao, ``Excess distortion
  exponent analysis for semantic-aware {MIMO} communication systems,''
  \emph{IEEE Trans. Wireless Commun.}, vol.~22, no.~9, pp. 5927--5940, Sep.
  2023.

\bibitem{Stavrou2023Role}
P.~A. Stavrou and M.~Kountouris, ``The role of fidelity in goal-oriented
  semantic communication: A rate distortion approach,'' \emph{IEEE Trans.
  Commun.}, vol.~71, no.~7, pp. 3918--3931, Jul. 2023.

\bibitem{Li2023Fundamental}
D.~Li, J.~Huang, C.~Huang, X.~Qin, H.~Zhang, and P.~Zhang, ``Fundamental
  limitation of semantic communications: Neural estimation for
  rate-distortion,'' \emph{J. Commun. Inf. Netw.}, vol.~8, no.~4, pp. 303--318,
  Dec. 2023.

\bibitem{Volker1962Asymptotische}
V.~Strassen, ``Asymptotische absch$\ddot{\textrm{a}}$tzungen in {S}hannon's
  informationstheorie,'' in \emph{Trans. 3rd Prague Conf. Inf. Theory}, 1962,
  p. 689–723.

\bibitem{Kontoyiannis1997Second}
I.~Kontoyiannis, ``Second-order noiseless source coding theorems,'' \emph{IEEE
  Trans. Inf. Theory}, vol.~43, no.~4, pp. 1339--1341, Jul. 1997.

\bibitem{Kontoyiannis2000Pointwise}
------, ``Pointwise redundancy in lossy data compression and universal lossy
  data compression,'' \emph{IEEE Trans. Inf. Theory}, vol.~46, no.~1, pp.
  136--152, Jan. 2000.

\bibitem{Kontoyiannis2002Arbitrary}
I.~Kontoyiannis and J.~Zhang, ``Arbitrary source models and bayesian codebooks
  in rate-distortion theory,'' \emph{IEEE Trans. Inf. Theory}, vol.~48, no.~8,
  pp. 2276--2290, Aug. 2002.

\bibitem{Ingber2011Dispersion}
A.~Ingber and Y.~Kochman, ``The dispersion of lossy source coding,'' in
  \emph{Proc. Data Compress. Conf. (DCC)}, Mar. 2011, pp. 53--62.

\bibitem{Kostina2013LossyJoint}
V.~\vspace{0mm}Kostina and S.~Verd\'u, ``Lossy joint source-channel coding in
  the finite blocklength regime,'' \emph{IEEE Trans. Inf. Theory}, vol.~59,
  no.~5, pp. 2545--2575, May 2013.

\bibitem{No2016Strong}
A.~No, A.~Ingber, and T.~Weissman, ``Strong successive refinability and
  rate-distortion-complexity tradeoff,'' \emph{IEEE Trans. Inf. Theory},
  vol.~62, no.~6, pp. 3618--3635, Jun. 2016.

\bibitem{Zhou2017Second}
L.~Zhou, V.~Y.~F. Tan, and M.~Motani, ``Second-order and moderate deviations
  asymptotics for successive refinement,'' \emph{IEEE Trans. Inf. Theory},
  vol.~63, no.~5, pp. 2896--2921, May 2017.

\bibitem{Zhou2019Refined}
------, ``Refined asymptotics for rate-distortion using {G}aussian codebooks
  for arbitrary sources,'' \emph{IEEE Trans. Inf. Theory}, vol.~65, no.~5, pp.
  3145--3159, May 2019.

\bibitem{Zhou2019NonAsymptotic}
L.~Zhou and M.~Motani, ``Non-asymptotic converse bounds and refined asymptotics
  for two source coding problems,'' \emph{IEEE Trans. Inf. Theory}, vol.~65,
  no.~10, pp. 6414--6440, Oct. 2019.

\bibitem{Vincent2014Asymptotic}
V.~Y.~F. Tan, \emph{Asymptotic Estimates in Information Theory with
  Non-Vanishing Error Probabilities}, Sep. 2014, vol.~11, no. 1-2.

\bibitem{Zhou2023Finite}
\BIBentryALTinterwordspacing
L.~\vspace{0mm}Zhou and M.~Motani, ``On finite blocklength lossy source
  coding,'' {A}ccessed: Dec. 21, 2023. [Online]. Available:
  \url{https://arxiv.org/abs/2301.07871}
\BIBentrySTDinterwordspacing

\bibitem{Kostina2013Lossy}
V.~Kostina, ``Lossy data compression: Nonasymptotic fundamental limits,'' Ph.D.
  dissertation, Dept. Elect. Eng., Princeton Univ., Princeton, NJ, US, Sep.
  2013.

\bibitem{Yury2012Information}
Y.~Polyanskiy, ``6.441: Information theory lecture notes,'' 2012, {D}ept.
  Elect. Eng. Comput. Sci., Massachusetts Inst. Technol., Cambridge, MA, USA.

\bibitem{Feller1971Introduction}
W.~Feller, \emph{An Introduction to Probability Theory and Its Applications},
  2nd~ed.\hskip 1em plus 0.5em minus 0.4em\relax New York, NY, USA: Wiley,
  1971, vol.~2.

\bibitem{En1999On}
E.~hui Yang and Z.~Zhang, ``On the redundancy of lossy source coding with
  abstract alphabets,'' \emph{IEEE Trans. Inf. Theory}, vol.~45, no.~4, pp.
  1092--1110, May 1999.

\end{thebibliography}

\end{document}